\documentclass[a4paper,british,fleqn]{article}
\usepackage[T1]{fontenc}
\usepackage[latin9]{inputenc}
\setlength{\parskip}{\medskipamount}
\setlength{\parindent}{0pt}
\usepackage{xcolor}
\usepackage{pdfcolmk}
\usepackage{pifont}
\usepackage{float}
\usepackage{url}
\usepackage{amsmath}
\usepackage{amsthm}
\usepackage{amssymb}
\usepackage{graphicx}
\usepackage[authoryear]{natbib}
\PassOptionsToPackage{normalem}{ulem}
\usepackage{ulem}

\makeatletter

\pdfpageheight\paperheight
\pdfpagewidth\paperwidth

\newcommand{\lyxmathsym}[1]{\ifmmode\begingroup\def\b@ld{bold}
  \text{\ifx\math@version\b@ld\bfseries\fi#1}\endgroup\else#1\fi}

\floatstyle{ruled}
\newfloat{algorithm}{tbp}{loa}
\providecommand{\algorithmname}{Algorithm}
\floatname{algorithm}{\protect\algorithmname}
\providecolor{lyxadded}{rgb}{0,0,1}
\providecolor{lyxdeleted}{rgb}{1,0,0}

\DeclareRobustCommand{\lyxsout}[1]{\ifx\\#1\else\sout{#1}\fi}

\theoremstyle{plain}
\newtheorem{thm}{\protect\theoremname}
  \theoremstyle{plain}
  \newtheorem{prop}[thm]{\protect\propositionname}
  \theoremstyle{plain}
  \newtheorem{cor}[thm]{\protect\corollaryname}

\usepackage{fullpage}
\usepackage{textcomp}
\usepackage{txfonts}
\usepackage{times}
\usepackage{setspace}
\usepackage{color}

\setcitestyle{round}

\date{}

\makeatother

\usepackage{babel}
  \providecommand{\corollaryname}{Corollary}
  \providecommand{\propositionname}{Proposition}
\providecommand{\theoremname}{Theorem}

\begin{document}

\title{Anytime Monte Carlo}

\author{Lawrence M. Murray\\
Uber AI  \and Sumeetpal Singh\\
University of Cambridge\\
\& The Alan Turing Institute \and Anthony Lee\\
University of Bristol\\
\& The Alan Turing Institute}
\maketitle
\begin{abstract}
Monte Carlo algorithms simulate some prescribed number
of samples, taking some random real time to complete the computations
necessary. This work considers the converse: to impose a real-time
budget on the computation, which results in the number of samples simulated
being random. To complicate matters, the real time taken for each simulation
may depend on the sample produced, so that the samples themselves
are not independent of their number, and a length bias with respect
to compute time is apparent. This is especially problematic when a
Markov chain Monte Carlo (MCMC) algorithm is used and the final state
of the Markov chain\textemdash rather than an average over all states\textemdash is
required, which is the case in parallel tempering implementations of MCMC. The length bias does not diminish with the compute budget
in this case. It also occurs in sequential Monte Carlo (SMC)
algorithms, which is the focus of this paper. We propose an \emph{anytime} framework to address the
concern, using a continuous-time Markov jump process to study the
progress of the computation in real time. We first show that for any MCMC algorithm, the length
bias of the final state's distribution due to the imposed real-time computing budget  can be eliminated by using a multiple
chain construction. The utility of this construction is then demonstrated
on a large-scale SMC$^{2}$ implementation, using four billion particles
distributed across a cluster of 128 graphics processing units on the
Amazon EC2 service. The anytime framework imposes a real-time budget
on the MCMC move steps within the SMC$^{2}$ algorithm, ensuring that all processors
are simultaneously ready for the resampling step, demonstrably reducing
idleness to due waiting times and providing substantial control over the total compute
budget.
\end{abstract}

\section{Introduction}

Real-time budgets arise in embedded systems, fault-tolerant computing,
energy-constrained computing, distributed computing and, potentially,
management of cloud computing expenses and the fair computational
comparison of methods. Here, we are particularly interested in the
development of Monte Carlo algorithms to observe such budgets, as
well as the statistical properties\textemdash and limitations\textemdash of
these algorithms. While the approach has broader applications, the
pressing motivation in this work is the deployment of Monte Carlo
methods on large-scale distributed computing systems, using real-time
budgets to ensure the simultaneous readiness of multiple processors
for collective communication, minimising the idle wait time that typically
precedes it.

A conceptual solution is found in the \emph{anytime algorithm}. An
anytime algorithm maintains a numerical result at all times, and will
improve upon this result if afforded extra time. When interrupted,
it can always return the current result. Consider, for example, a
greedy optimisation algorithm: its initial result is little more than
a guess, but at each step it improves upon that result according to
some objective function. If interrupted, it may not return the optimal
result, but it should have improved on the initial guess. A conventional
Markov chain Monte Carlo (MCMC) algorithm, however, is not anytime
if we are interested in the final state of the Markov chain at some
interruption time: as will be shown, the distribution of this state
is length-biased by compute time, and this bias does not reduce when
the algorithm is afforded additional time.

MCMC algorithms are typically run for some long time and, after removing
an initial burn-in period, the expectations of some functionals of
interest are estimated from the remainder of the chain. The prescription
of a real-time budget, $t$, introduces an additional bias in these
estimates, as the number of simulations that will have been completed
is a random variable, $N(t)$. This bias diminishes in $t$, and for
long-run Markov chains may be rendered negligible. In motivating this
work, however, we have in mind situations where the final state of
the chain is most important, rather than averages over all states.
The bias in the final state does not diminish in $t$. Examples where
this may be important include (a) sequential Monte Carlo (SMC), where,
after resampling, a small number of local MCMC moves are performed
on each particle before the next resampling step, and (b) parallel
tempering, where, after swapping between chains, a number of local
MCMC moves are performed on each chain before the next swap. In a
distributed computing setting, the resampling step of SMC, and the
swap step of parallel tempering, require the synchronisation of multiple
processes, such that all processors must idle until the slowest completes.
By fixing a real-time budget for local MCMC moves, we can reduce this
idle time and eliminate the bottleneck, but must ensure that the length
bias imposed by the real-time budget is negligible, if not eliminated
entirely. In a companion paper \citep{ASM20}, we also develop this idea for parallel tempering based MCMC. 

The compute time of a Markov chain depends on exogenous factors such
as processor hardware, memory bandwidth, I/O load, network traffic,
and other jobs contesting the same processor. But, importantly, it
may also depend on the states of the Markov chain. Consider (a) inference
for a mixture model where one parameter gives the number of components,
and where the time taken to evaluate the likelihood of a data set
is proportional to the number of components; (b) a differential model
that is numerically integrated forward in time with an adaptive time
step, where the number of steps required across any given interval
is influenced by parameters; (c) a complex posterior distribution
simulated using a pseudomarginal method~\citep{Andrieu2009}, where
the number of samples required to marginalise latent variables depends
on the parameters; (d) a model requiring approximate Bayesian computation
(ABC) with a rejection sampling mechanism, where the acceptance rate
is higher for parameters with higher likelihood, and so the compute
time lower, and vice versa. Even in simple cases there may be a hidden
dependency. Consider, for example, a Metropolis\textendash Hastings
sampler where there is some seemingly innocuous book-keeping code,
such as for output, to be run when a proposal is accepted, but not
when a proposal is rejected. States from which the proposal is more
likely to accept then have longer expected hold time due to this book-keeping
code. In general, we should assume that there is indeed a dependency,
and assess whether the resulting bias is appreciable.

The first major contribution of the present work is a framework for
converting any existing Monte Carlo algorithm into an anytime Monte
Carlo algorithm, by running multiple Markov chains in a particular
manner. The framework can be applied in numerous contexts where real-time
considerations might be beneficial. The second major contribution
is an application in one such context: an SMC$^{2}$ algorithm deployed
on a large-scale distributed computing system. An anytime treatment
is applied to the MCMC moves steps that precede resampling, which
requires synchronisation between processors, and can be a bottleneck
in distributed deployment of the algorithm. The real-time budget reduces
wait time at synchronisation, relieves the resampling bottleneck,
provides direct control over the compute budget for the most expensive
part of the computation, and in doing so provides indirect control
over the total compute budget.

Anytime Monte Carlo algorithms have recently garnered some interest.
\citet{Paige2014} propose an anytime SMC algorithm called the \emph{particle
cascade}. This transforms the structure of conventional SMC by running
particles to completion one by one, with the sufficient statistics
of preceding particles used to make birth-death decisions in place
of the usual resampling step. To circumvent the sort of real-time
pitfalls discussed in this work, a random schedule of work units is
used. This requires a central scheduler, so it is not immediately
obvious how the particle cascade might be distributed. In contrast,
we propose, in this work, an SMC algorithm with the conventional structure,
but including parameter estimation as in SMC$^{2}$~\citep{Chopin2013},
and an anytime treatment of the move step. This facilitates distributed
implementation, but provides only indirect control over the total
compute budget.

The construction of Monte Carlo estimators under real-time budgets
has been considered before. \citet*{Heidelberger1988}, \citet*{Glynn1990}
and \citet*{Glynn1991} suggest a number of estimators for the mean
of a random variable after performing independent and identically
distributed (iid) simulation for some prescribed length of real time.
Bias and variance results are established for each. The validity of
their results relies on the exchangeability of simulations conditioned
on their number, and does not extend to MCMC algorithms except for
the special cases of regenerative Markov chains and perfect simulation.
The present work establishes results for MCMC algorithms (for which,
of course, iid sampling is a special case) albeit for a different
problem definition.

A number of other recent works are relevant. Recent papers have considered
the distributed implementation of Gibbs sampling and the implications
of asynchronous updates in this context, which involves real time
considerations~\citep{Terenin2015,DeSa2016}. As mentioned above,
optimisation algorithms already exhibit anytime behaviour and it is
natural to consider whether they might be leveraged to develop anytime
Monte Carlo algorithms, perhaps in a manner similar to the weighted
likelihood bootstrap~\citep{Newton1994}. \citet{Meeds2015} suggest
an approach along this vein for approximate Bayesian computation.
Beyond Monte Carlo, other methods for probabilistic inference might
be considered in an anytime setting. Embedded systems, with organic
real-time constraints, yield natural examples \citep[e.g.][]{Ramos2005}.

The remainder of the paper is structured as follows. Section \ref{sec:framework}
formalises the problem and the framework of a proposed solution; proofs
are deferred to Appendix \ref{sec:proofs}. Section \ref{sec:methods}
uses the framework to establish SMC algorithms with anytime moves,
amongst them an algorithm suitable for large-scale distributed computing.
Section \ref{sec:experiments} validates the framework on a simple
toy problem, and demonstrates the SMC algorithms on a large-scale
distributed computing case study. Section \ref{sec:discussion} discusses
some of the finer points of the results, before Section \ref{sec:conclusion}
concludes.

\section{Framework\label{sec:framework}}

Let $(X_{n})_{n=0}^{\infty}$ be a Markov chain with initial state
$X_{0}$, evolving on a space $\mathbb{X}$, with transition kernel
$X_{n}\mid x_{n-1}\sim\kappa(\mathrm{d}x\,|\,x_{n-1})$ and target
(invariant) distribution $\pi(\mathrm{d}x)$. We do not assume that
$\kappa$ has a density, e.g. it may be a Metropolis\textemdash Hastings
kernel. (In contrast the notation for a probability density would be $\kappa(x\,|\,x_{n-1}).$) A computer processor takes some random and positive real
time $H_{n-1}$ to complete the computations necessary to transition
from $X_{n-1}$ to $X_{n}$ via $\kappa$. Let $H_{n-1}\mid x_{n-1}\sim\tau(\mathrm{d}h\mid x_{n-1})$
for some probability distribution $\tau$. The cumulative distribution
function (cdf) corresponding to $\tau$ is denoted $F_{\tau}(h\mid x_{n-1})$,
and its survival function $\bar{F}_{\tau}(h\mid x_{n-1})=1-F_{\tau}(h\mid x_{n-1})$.
We assume (1) that $H>\epsilon>0$ for some minimal time $\epsilon$,
(2) that $\sup_{x\in\mathbb{X}}\mathbb{E}_{\tau}[H\mid x]<\infty$,
and (3) that $\tau$ is homogeneous in time.

The distribution $\tau$ captures the dependency of hold time on the
current state of the Markov chain, as well as on exogenous factors
such as contesting jobs that run on the same processor, memory management,
I/O, network latency, etc. The first two assumptions seem reasonable:
on a computer processor, a computation must take at least one clock
cycle, justifying a lower bound on $H$, and be expected to complete,
justifying finite expectation. The third assumption, of homogeneity,
is more restrictive. Exogenous factors may produce transient effects,
such as a contesting process that begins part way through the computation
of interest. We discuss the relaxation of this assumption\textemdash as
future work\textemdash in Section \ref{sec:discussion}.

Besides these assumptions, no particular form is imposed on $\tau$.
Importantly, we do not assume that $\tau$ is memoryless. In general,
nothing is known about $\tau$ except how to simulate it, precisely
by recording the length of time $H_{n-1}$ taken to simulate $X_{n}\,|\,x_{n-1}\sim\kappa(\mathrm{d}x\,|\,x_{n-1})$.
In this sense there exists a joint kernel $\kappa(\mathrm{d}x_{n},\mathrm{d}h_{n-1}\mid x_{n-1})=\kappa(\mathrm{d}x_{n}\mid h_{n-1},x_{n-1})\tau(\mathrm{d}h_{n-1}\mid x_{n-1})$
for which the original kernel $\kappa$ is the marginal over all possible
hold times $H_{n-1}$ in transit from $X_{n-1}$ to $X_{n}$:
\[
\kappa(\mathrm{d}x\mid x_{n-1})=\int_{0}^{\infty}\kappa(\mathrm{d}x\mid x_{n-1},h_{n-1})\tau(\mathrm{d}h_{n-1}\mid x_{n-1}).
\]

We now construct a real-time semi-Markov jump process to describe
the progress of the computation in real time. The states of the process
are given by the sequence $(X_{n})_{n=0}^{\infty}$, with associated
hold times $(H_{n})_{n=0}^{\infty}$. Define the arrival time of the
$n$th state $X_n$ as 
\[
A_{n}:=\sum_{i=0}^{n-1}H_{i},
\]
for all $n\geq1$, and $a_{0}=0$, and a process counting the number
of arrivals by time $t$ as
\[
N(t):=\max \left\{ n:A_{n}\leq t\right\} .
\]
Figure \ref{fig:process} illustrates a realisation of the process.

\begin{figure}
\begin{centering}
\includegraphics[width=1\textwidth]{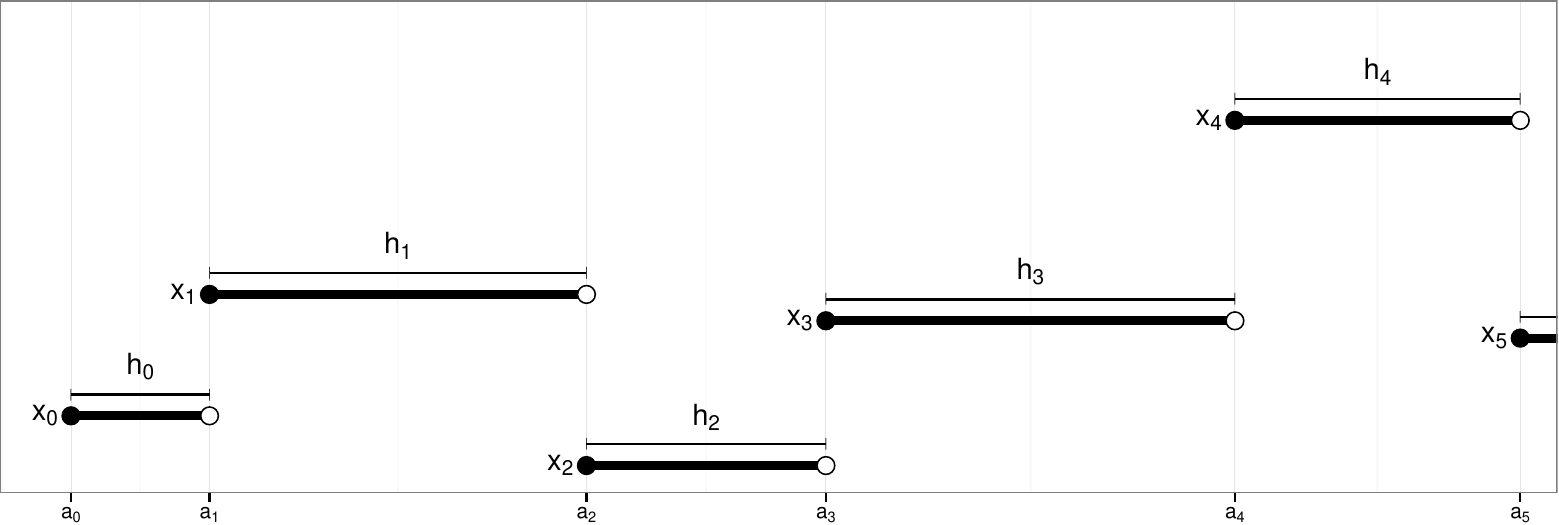} 
\par\end{centering}
\caption{A realisation of a Markov chain $(X_{n})_{n=0}^{\infty}$ in real
time, with hold times $(H_{n})_{n=0}^{\infty}$ and arrival times
$(A_{n})_{n=0}^{\infty}$.\label{fig:process}}
\end{figure}

Now, construct a continuous-time Markov jump process to chart the
progress of the simulation in real time. Let $X(t):=X_{N(t)}$ (the
interpolated process in Figure \ref{fig:process}) and define the
lag time elapsed since the last jump as $L(t):=t-A_{N(t)}$. Then
$(X,L)(t)$ is a Markov jump process.

The process $(X,L)(t)$ is readily manipulated to establish the following
properties. Proofs are given in Appendix \ref{sec:proofs}. 
\begin{prop}
\label{prop:invariant}The stationary distribution of the Markov jump
process $(X,L)(t)$ is 
\[
\alpha(\mathrm{d}x,\mathrm{d}l)=\frac{\bar{F_{\tau}}(l\mid x)}{\mathbb{E}_{\tau}[H]}\pi(\mathrm{d}x)\,\mathrm{d}l.
\]
\end{prop}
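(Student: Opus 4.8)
The plan is to realise $(X,L)(t)$ as a piecewise-deterministic Markov process and to verify that $\alpha$ is invariant for its generator. Between jumps the $x$-coordinate is frozen while the lag $l$ increases at unit rate; a jump out of $(x,l)$ occurs at hazard rate $\lambda(x,l)$ determined by $\tau(\cdot\mid x)$, and at the jump the state moves to $(x',0)$ with $x'$ drawn from the hold-time-conditioned kernel $\kappa(\mathrm dx'\mid l,x)$. The crucial modelling point is that the post-jump transition is governed by $\kappa(\mathrm dx'\mid l,x)$, not by the marginal kernel $\kappa(\mathrm dx'\mid x)$, because the realised hold time equals the lag $l$ attained at the instant of the jump. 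This yields the generator
\[
\mathcal L f(x,l)=\partial_l f(x,l)+\lambda(x,l)\Big(\int_{\mathbb X} f(x',0)\,\kappa(\mathrm dx'\mid l,x)-f(x,l)\Big),
\]
for test functions $f$ that are bounded and differentiable in $l$. The only fact I need about the hazard is the identity $\lambda(x,l)\,\bar F_{\tau}(l\mid x)\,\mathrm dl=\tau(\mathrm dl\mid x)=-\mathrm d\bar F_{\tau}(l\mid x)$; since the hazard always appears multiplied by $\bar F_{\tau}$, this lets me avoid assuming that $\tau$ admits a density.

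I would then show $\int \mathcal L f\,\mathrm d\alpha=0$ for all such $f$, which establishes stationarity. Writing $\alpha(\mathrm dx,\mathrm dl)=\mathbb E_{\tau}[H]^{-1}\bar F_{\tau}(l\mid x)\,\pi(\mathrm dx)\,\mathrm dl$ and integrating the drift term $\partial_l f$ by parts in $l$ produces a boundary contribution $-f(x,0)$ — using $\bar F_{\tau}(0\mid x)=1$, which assumption (1), $H>\epsilon$, guarantees, and the vanishing of the boundary term at infinity, which follows from bounded $f$ together with $\bar F_{\tau}(l\mid x)\to 0$ — plus an interior term $\int_0^\infty f(x,l)\,\tau(\mathrm dl\mid x)$. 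This interior term exactly cancels the diagonal $-f(x,l)$ contribution of the jump part, once the product of the hazard with $\bar F_{\tau}$ is rewritten as $\tau$.

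What survives is $\mathbb E_{\tau}[H]^{-1}\int_{\mathbb X}\big(-f(x,0)+\int_0^\infty\int_{\mathbb X} f(x',0)\,\kappa(\mathrm dx'\mid l,x)\,\tau(\mathrm dl\mid x)\big)\pi(\mathrm dx)$. Here I marginalise the hold time using the identity $\int_0^\infty\kappa(\mathrm dx'\mid l,x)\,\tau(\mathrm dl\mid x)=\kappa(\mathrm dx'\mid x)$ recorded in the excerpt, reducing the inner expression to $-f(x,0)+\int_{\mathbb X}f(x',0)\,\kappa(\mathrm dx'\mid x)$. Integrating against $\pi$ and invoking the invariance $\int_{\mathbb X}\kappa(\mathrm dx'\mid x)\,\pi(\mathrm dx)=\pi(\mathrm dx')$ makes the two terms coincide, so the integral is zero. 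A short separate check confirms that $\alpha$ is a probability measure: since $\int_0^\infty\bar F_{\tau}(l\mid x)\,\mathrm dl=\mathbb E_{\tau}[H\mid x]$, the normalising constant is $\mathbb E_{\tau}[H]=\int_{\mathbb X}\mathbb E_{\tau}[H\mid x]\,\pi(\mathrm dx)$, exactly as written.

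The main obstacle is conceptual rather than computational: correctly identifying the jump mechanism of the process, in particular that the embedded transition must be taken conditional on the hold time $l$ and only becomes the marginal kernel $\kappa$ after $l$ is integrated out against $\tau$. Once this is in place the cancellations are forced, and the two analytic inputs — the integration by parts, controlled by assumptions (1) and (2), and the invariance of $\pi$ — slot in cleanly. A secondary point to handle with care is remaining density-free, which is why I keep $\lambda\,\bar F_{\tau}\,\mathrm dl$ bundled as the measure $\tau$ throughout, rather than manipulating a hazard rate in isolation.
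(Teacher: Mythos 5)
Your computation is correct and isolates exactly the right ingredients---the post-jump kernel must be the hold-time-conditioned $\kappa(\mathrm{d}x'\mid l,x)$, which becomes the marginal $\kappa(\mathrm{d}x'\mid x)$ only after integrating against $\tau(\mathrm{d}l\mid x)$, at which point invariance of $\pi$ closes the argument---but your route is genuinely different from the paper's. You work with the infinitesimal generator of a piecewise-deterministic Markov process and verify $\int\mathcal{L}f\,\mathrm{d}\alpha=0$; the paper instead exploits assumption (1), $H>\epsilon$, to write down the exact transition kernel of the skeleton chain over a time increment $\Delta\leq\epsilon$ (in which at most one jump can occur, so the kernel splits into an explicit jump/no-jump mixture $\rho\lambda_{1}+(1-\rho)\lambda_{0}$) and checks $\int\lambda(\cdot\mid x,l)\,\alpha(\mathrm{d}x,\mathrm{d}l)=\alpha$ directly. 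The cancellations are the same in both proofs (your interior integration-by-parts term cancelling the diagonal jump term is the continuous-time shadow of the paper's term \ding{203} telescoping of survival functions), but the finite-increment argument buys something your sketch still owes: to conclude stationarity from $\int\mathcal{L}f\,\mathrm{d}\alpha=0$ you must argue that your test class (bounded, differentiable in $l$) is a core for the extended generator of the PDMP, or invoke a full-generator/martingale characterisation; this is standard but not free, whereas the paper's one-step kernel identity needs no such domain argument. Two smaller caveats: your claim to be density-free is not quite earned, since writing the jump intensity as a multiplicative rate $\lambda(x,l)$ already presumes $\tau(\cdot\mid x)$ has no atoms (the paper also assumes a density, so you are no worse off, but you should not advertise the generality); and the vanishing of the boundary term at $l=\infty$ uses $\bar{F}_{\tau}(l\mid x)\to 0$, which you should tie explicitly to assumption (2) as you indicate. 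With the core issue addressed, your proof stands as a valid and arguably more transparent alternative.
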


\begin{cor}
\label{prop:length-bias}With respect to $\pi(\mathrm{d}x)$, $\alpha(\mathrm{d}x)$
is length-biased by expected hold time: 
\[
\alpha(\mathrm{d}x)=\frac{\mathbb{E}_{\tau}[H\mid x]}{\mathbb{E}_{\tau}[H]}\pi(\mathrm{d}x).
\]
\end{cor}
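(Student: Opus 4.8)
The plan is to obtain $\alpha(\mathrm{d}x)$ simply as the marginal of the joint stationary law $\alpha(\mathrm{d}x,\mathrm{d}l)$ already supplied by Proposition~\ref{prop:invariant}, by integrating out the lag coordinate $l$. Writing
$$\alpha(\mathrm{d}x) = \int_0^\infty \alpha(\mathrm{d}x,\mathrm{d}l) = \frac{\pi(\mathrm{d}x)}{\mathbb{E}_{\tau}[H]} \int_0^\infty \bar{F}_{\tau}(l\mid x)\,\mathrm{d}l,$$
the whole computation reduces to evaluating the inner integral of the survival function over the lag variable, with the state-dependent factor $\pi(\mathrm{d}x)$ and the normaliser $\mathbb{E}_{\tau}[H]$ carried through unchanged.

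The single substantive step is the tail-integral identity for a nonnegative random variable: for each fixed $x$,
$$\int_0^\infty \bar{F}_{\tau}(l\mid x)\,\mathrm{d}l = \int_0^\infty \mathbb{P}(H>l\mid x)\,\mathrm{d}l = \mathbb{E}_{\tau}[H\mid x].$$
I would justify this by writing $H = \int_0^\infty \mathbf{1}\{l<H\}\,\mathrm{d}l$ pointwise and applying Tonelli's theorem to interchange the expectation over $H$ with the Lebesgue integral over $l$; since both are integrals of nonnegative quantities the interchange is unconditional. Assumption~(2), namely $\sup_{x\in\mathbb{X}}\mathbb{E}_{\tau}[H\mid x]<\infty$, then guarantees that the resulting quantity is finite for every $x$, so the marginal is well defined.

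Substituting the identity into the display above yields
$$\alpha(\mathrm{d}x) = \frac{\mathbb{E}_{\tau}[H\mid x]}{\mathbb{E}_{\tau}[H]}\,\pi(\mathrm{d}x),$$
which is exactly the claim. As a consistency check I would verify that this is a probability measure: integrating against $\pi$ forces the normaliser $\mathbb{E}_{\tau}[H]$ to equal the stationary-averaged hold time $\int_{\mathbb{X}}\mathbb{E}_{\tau}[H\mid x]\,\pi(\mathrm{d}x)$, which is precisely the interpretation of $\mathbb{E}_{\tau}[H]$ in Proposition~\ref{prop:invariant}, and which is finite by assumption~(2). There is no genuine obstacle here: granted Proposition~\ref{prop:invariant}, the corollary is essentially a one-line marginalisation, and the only points requiring care are the measure-theoretic justification of the tail-integral identity and the appeal to assumption~(2) to secure integrability.
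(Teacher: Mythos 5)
Your proposal is correct and follows exactly the paper's own argument: the published proof is the same one-line marginalisation of $\alpha(\mathrm{d}x,\mathrm{d}l)$ over $l$ followed by the tail-integral identity $\int_{0}^{\infty}\bar{F}_{\tau}(l\mid x)\,\mathrm{d}l=\mathbb{E}_{\tau}[H\mid x]$. Your additional remarks on Tonelli and on assumption~(2) merely make explicit what the paper leaves implicit.
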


The interpretation of these results are as follows: if the chain $(X,L)(t)$ is initialised at some time $t=b<0$ such that it is in stationarity at time $t=0$, then $X(0)$'s distribution is precisely given in Corollary \ref{prop:length-bias}. By Corollary \ref{prop:length-bias}, in stationarity, the likelihood
of a particular $x$ appearing is proportional to the likelihood $\pi$ with
which it arises under the original Markov chain, and the expected
length of real time for which it holds when it does.

\begin{cor}
\label{prop:anytime-to-target}Let $(X,L)\sim\alpha$, then the conditional
probability distribution of $X$ given $L<\epsilon$ is $\pi$.
\end{cor}

Corollary \ref{prop:anytime-to-target} simply recovers the original
Markov chain from the Markov jump process. This result follows since $\bar{F_{\tau}}(l\mid x) =1$  (of Proposition \ref{prop:invariant}) for all $x$ and $l \leq\epsilon$.

We refer to $\alpha$ as the \emph{anytime distribution}. It is precisely
the stationary distribution of the Markov jump process. The new name
is introduced to distinguish it from the stationary distribution $\pi$ of
the original Markov chain, which we continue to refer to as the \emph{target
distribution}.

Finally, we state an ergodic theorem from \citet{Alsmeyer1997}; see
also \citet{Alsmeyer1994}. Rather than study the process $(X,L)(t)$,
we can equivalently study $(X_{n},A_{n})_{n=1}^{\infty}$, with the
initial state being $(X_{0},0)$. This is a Markov renewal process.
Conditioned on $(X_{n})_{n=0}^{\infty}$, the hold times $(H_{n})_{n=0}^{\infty}$
are independent. This conditional independence can be exploited to
derive ergodic properties of $(X_{n},A_{n})_{n=1}^{\infty}$, based
on assumed regularity of the driving chain $(X_{n})_{n=0}^{\infty}$.
\begin{prop}[\citealt{Alsmeyer1997}, Corollary\ 1]
\label{prop:ergodic}Assume that the Markov chain $(X_{n})_{n=1}^{\infty}$
is Harris recurrent. For a function $g:\mathbb{X}\rightarrow\mathbb{R}$
with $\int\left|g(x)\right|\alpha(\mathrm{d}x)<\infty$,
\[
\lim_{t\rightarrow\infty}\mathbb{E}\left[g(X(t))\mid x(0),l(0)\right]=\int g(x)\alpha(\mathrm{d}x)
\]
for $\pi$-almost all $x(0)$ and all $l(0)$.
\end{prop}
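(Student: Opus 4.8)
The plan is to read the claim off the Markov renewal structure already present in the construction and then invoke the cited theorem, the only real work being to check that the standing assumptions supply its hypotheses and that its limiting measure coincides with the $x$-marginal of the anytime distribution $\alpha$. First I would make the semi-Markov structure explicit: the pair sequence $(X_{n},A_{n})_{n\geq0}$ is a Markov renewal process whose driving chain is $(X_{n})$ and whose increments are the hold times $H_{n}$. The decisive structural fact, noted in the excerpt, is that conditional on the whole trajectory $(X_{n})$ the hold times are independent with $H_{n}\sim\tau(\cdot\mid X_{n})$; this is exactly the conditional independence that places the process in Alsmeyer's framework, with $X(t)=X_{N(t)}$ the associated semi-Markov process.

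Next I would verify the hypotheses. Harris recurrence of $(X_{n})$, with invariant law $\pi$, is assumed. The stationary mean inter-arrival time is
\[
\int_{\mathbb{X}}\mathbb{E}_{\tau}[H\mid x]\,\pi(\mathrm{d}x)=\mathbb{E}_{\tau}[H],
\]
which is finite by assumption (2) and strictly positive by assumption (1); this is precisely the normalising constant appearing in $\alpha$, so the process is positive recurrent in real time. To obtain a genuine limit rather than a Cesàro average one also needs a non-lattice (spread-out) condition on the conditional hold-time laws $\tau(\cdot\mid x)$: the lower bound $H>\epsilon>0$ prevents any accumulation of arrival times, and in the typical situation where $\tau(\cdot\mid x)$ is absolutely continuous the spread-out requirement holds. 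Internally, the engine behind the cited corollary is a regeneration argument (Nummelin splitting of the Harris chain) that produces i.i.d.\ cycles and reduces the statement to the key renewal theorem applied to the real-time reward accumulated by $g(X(\cdot))$.

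With the hypotheses in hand, Alsmeyer's Corollary~1 gives
\[
\lim_{t\rightarrow\infty}\mathbb{E}\left[g(X(t))\mid x(0),l(0)\right]=\int_{\mathbb{X}}g(x)\,\mu(\mathrm{d}x),
\]
where $\mu$ is the size-biasing of $\pi$ by expected hold time. It remains only to identify $\mu$ with $\alpha(\mathrm{d}x)$: integrating the density of Proposition~\ref{prop:invariant} over $l$ and using the identity $\int_{0}^{\infty}\bar{F}_{\tau}(l\mid x)\,\mathrm{d}l=\mathbb{E}_{\tau}[H\mid x]$ yields exactly the length-biased form of Corollary~\ref{prop:length-bias}, so $\mu=\alpha$. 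The integrability hypothesis $\int|g|\,\mathrm{d}\alpha<\infty$ is the condition under which the key renewal theorem may be applied to $g$.

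The main obstacle is not the application itself but the careful handling of the two qualifications in the statement. The restriction to $\pi$-almost all $x(0)$ is the unavoidable null set arising from the splitting/regeneration construction for a merely Harris (rather than uniformly ergodic) chain, and I would track exactly where it enters. That $l(0)$ may be arbitrary is because the initial lag is forgotten at the first jump, which occurs in almost surely finite time by assumptions (1) and (2); making this loss of memory of the initial lag precise, and confirming that it does not enlarge the exceptional set in $x(0)$, is the delicate step.
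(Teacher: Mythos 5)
Your proposal takes the same route the paper intends: the paper gives no proof of this proposition at all, stating it as a direct citation of Alsmeyer's Corollary~1 applied to the Markov renewal process $(X_{n},A_{n})$, and your verification of its hypotheses (Harris recurrence of the driving chain, conditional independence of hold times, finite and strictly positive mean hold time, and identification of the limiting measure with $\alpha(\mathrm{d}x)$ via $\int_{0}^{\infty}\bar{F}_{\tau}(l\mid x)\,\mathrm{d}l=\mathbb{E}_{\tau}[H\mid x]$) is precisely the work the paper leaves implicit. Your observation that a non-lattice/spread-out condition on $\tau(\cdot\mid x)$ is needed to obtain a genuine limit rather than a Ces\`aro average is a correct and worthwhile point that the paper's statement silently elides.
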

A further interpretation of this result confirms that, regardless of how we initialise the chain $(X,L)(t)$  at time $t=0$, at a future time $t=T$ the distribution of $X(T)$ is close to $\alpha$ and approaches $\alpha$ as $T\rightarrow \infty$.

\subsection{Establishing anytime behaviour\label{sec:establishing}}

The above results establish that, when interrupted at real time $t$,
the state of a Monte Carlo computation is distributed according to
the anytime distribution, $\alpha$. We wish to establish situations
in which it is instead distributed according to the target distribution,
$\pi$. This will allow us to draw samples of $\pi$ simply by interrupting
the running process at any time $t$, and will form the basis of anytime
Monte Carlo algorithms.

Recalling Corollary \ref{prop:length-bias}, a sufficient condition
to establish $\alpha(\mathrm{d}x)=\pi(\mathrm{d}x)$ is that $\mathbb{E}_{\tau}[H\mid x]=\mathbb{E}_{\tau}[H]$,
i.e. for the expected hold time to be independent of $X$. For iid
sampling, this is trivially the case: we have $\kappa(\mathrm{d}x\mid x_{n-1})=\pi(\mathrm{d}x)$,
the hold time $H_{n-1}$ for $X_{n-1}$ is the time taken to draw
$X_{n}\sim\pi(\mathrm{d}x)$ and so is independent of $X_{n-1}$,
and $\mathbb{E}_{\tau}[H\mid x]=\mathbb{E}_{\tau}[H]$. 

For non-iid sampling, first consider the following change to the Markov
kernel:
\[
X_{n}\mid x_{n-2}\sim\kappa(\mathrm{d}x\mid x_{n-2}).
\]
That is, each new state $X_{n}$ depends not on the previous state,
$x_{n-1}$, but on one state back again, $x_{n-2}$, so that odd-
and even- numbered states are independent. The hold times of the even-numbered
states are the compute times of the odd-numbered states and vice versa,
so that hold times are independent of states, and the desired property
$\mathbb{E}_{\tau}[H\mid x]=\mathbb{E}_{\tau}[H]$, and so $\alpha(\mathrm{d}x)=\pi(\mathrm{d}x)$,
is achieved. This sampling strategy is illustrated in Figure \ref{fig:two_chain} where the odd chain is chain 1 (superscript 1) and the even chain is chain 2. When querying the sampler at any time $t$, the sampler returns the chain that is \emph{holding} and not being worked, which is chain 2 in the figure. It follows then that the returned sample is distributed according to $\pi$ as desired. In an attempt to increase the efficiency of this procedure for SMC methods in Section \ref{sec:methods}, we extend the idea to $K+1$ chains as follows. The processor works/samples each of the $K+1$ chains in sequential order. At any time $t$, when queried, all but one chain is being worked and the processor returns the states of the $K$ chains that are not being worked on. The returned states will be independent of each other with each having distribution $\pi$.

\begin{figure}
\begin{centering}
\includegraphics[width=1\textwidth]{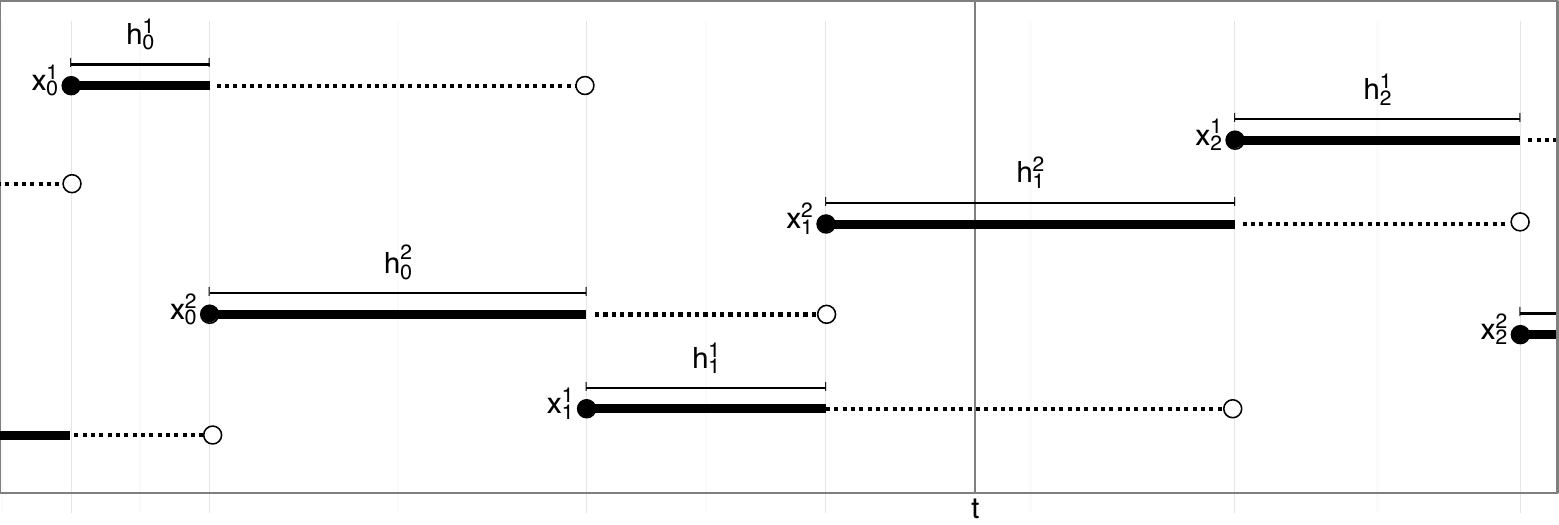} 
\par\end{centering}
\caption{Illustration of the multiple chain concept with two Markov chains.
At any time, one chain is being simulated (indicated with a dotted
line) while one chain is waiting (indicated with a solid line). When
querying the process at some time $t$, it is the state of the waiting
chain that is reported, so that the hold times of each chain are the
compute times of the other chain. For $K+1\geq2$ chains, there is
always one chain simulating while $K$ chains wait, and when querying
the process at some time $t$, the states of all $K$ waiting chains
are reported. \label{fig:two_chain}}
\end{figure}


Formally, suppose that we are simulating $K$ number of Markov chains,
with $K$ a positive integer, plus one extra chain. Denote these $K+1$
chains as $(X_{n}^{1:K+1})_{n=0}^{\infty}$. For simplicity, assume
that all have the same target distribution $\pi$, kernel $\kappa$,
and hold time distribution $\tau$. The joint target is
\[
\Pi(\mathrm{d}x^{1:K+1})=\prod_{k=1}^{K+1}\pi(\mathrm{d}x^{k}).
\]
The $K+1$ chains are simulated on the same processor, one at a time,
in a serial schedule. To avoid introducing an index for the currently
simulating chain, it is equivalent that chain $K+1$ is always the
one simulating, but that states are rotated between chains after each
jump. Specifically, the state of chain $K+1$ at step $n-1$ becomes
the state of chain 1 at step $n$, and the state of each other chain
$k\in\left\{ 1,\ldots,K\right\} $ becomes the state of chain $k+1$.
The transition can then be written as
\[
X_{n}^{1:K+1}\mid x_{n-1}^{1:K+1}\sim\kappa(\mathrm{d}x_{n}^{1}\mid x_{n-1}^{K+1})\prod_{k=1}^{K}\delta_{x_{n-1}^{k}}(\mathrm{d}x_{n}^{k+1}).
\]

As before, this joint Markov chain has an associated joint Markov
jump process $(X^{1:K+1},L)(t)$, where $L(t)$ is the lag time elapsed
since the last jump. This joint Markov jump process is readily manipulated
to yield the following properties, analogous to the single chain case.
Proofs are given in Appendix \ref{sec:proofs}.
\begin{prop}
\label{prop:product-invariant}The stationary distribution of the
Markov jump process $(X^{1:K+1},L)(t)$ is
\begin{equation}
A(\mathrm{d}x^{1:K+1},\mathrm{d}l)=\alpha(\mathrm{d}x^{K+1},\mathrm{d}l)\prod_{k=1}^{K}\pi(\mathrm{d}x^{k}).\label{eq:product-anytime}
\end{equation}
\end{prop}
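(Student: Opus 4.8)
The plan is to verify directly that $A(\mathrm{d}x^{1:K+1},\mathrm{d}l)$ is stationary for the joint Markov jump process, mirroring the single-chain argument of Proposition~\ref{prop:invariant} but exploiting the rotation structure of the kernel. I would begin by writing down the generator (or equivalently the infinitesimal transition behaviour) of $(X^{1:K+1},L)(t)$. Between jumps, only the lag clock $L$ advances deterministically at unit rate, while the states $x^{1:K+1}$ are held fixed; at a jump, the configuration is resampled according to the rotation kernel with $x^{1}_{n}$ drawn from $\kappa(\cdot\mid x^{K+1}_{n-1})$ and the remaining components shifted by the deterministic shifts $\delta_{x^{k}_{n-1}}(\mathrm{d}x^{k+1}_{n})$. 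The jump rate out of a configuration with lag $l$ and simulating-chain state $x^{K+1}$ is governed by the hazard of the hold-time distribution $\tau(\cdot\mid x^{K+1})$, namely $f_{\tau}(l\mid x^{K+1})/\bar{F}_{\tau}(l\mid x^{K+1})$, since chain $K+1$ is the one whose simulation is in progress.

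The key observation is that the only component whose hold-time currently matters is chain $K+1$, because that is the chain being simulated; the other $K$ chains are merely waiting, having each been simulated to completion on a previous turn. I would therefore show that the proposed $A$ factors exactly so that the single active chain $x^{K+1}$ carries the anytime marginal $\alpha(\mathrm{d}x^{K+1},\mathrm{d}l)$ established in Proposition~\ref{prop:invariant}, while the $K$ waiting chains each carry the clean target $\pi(\mathrm{d}x^{k})$. The natural strategy is to check the stationarity balance in two pieces: the flow induced by the deterministic drift in $l$ (which only sees the $\bar{F}_{\tau}(l\mid x^{K+1})$ factor in $\alpha$), and the flow induced by jumps. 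At a jump, the state that was active, $x^{K+1}$, has just finished simulating and rotates into chain $1$'s slot, where it now appears as a completed draw; the freshly drawn state enters via $\kappa$ and becomes the new active chain. One must verify that, integrating the pre-jump density against the jump kernel, the post-jump density reproduces $A$. This reduces, after the rotation bookkeeping, to the single-chain balance already proved, together with the fact that $\int \kappa(\mathrm{d}x^{1}\mid x^{K+1})\,\pi(\mathrm{d}x^{K+1}) = \pi(\mathrm{d}x^{1})$ by invariance of $\kappa$, which supplies the clean $\pi$-factor picked up by the chain rotating into a waiting slot.

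Concretely, I would either (i) write the forward Kolmogorov (Fokker--Planck-type) equation for the joint process and substitute $A$ to confirm $\partial_t A = 0$, or (ii) argue more cleanly that marginalising the active coordinate reproduces Proposition~\ref{prop:invariant} exactly, while each waiting coordinate is, by the rotation, a state that was active one full turn earlier and hence—by Corollary~\ref{prop:anytime-to-target} and the invariance of $\pi$ under $\kappa$—distributed as $\pi$ independently of the rest. The cleanest route is likely the second: observe that chain $k$ (for $k\le K$) holds a value that was produced by a completed $\kappa$-transition and has since been ``waiting,'' so it is an ordinary $\pi$-distributed draw, decoupled from the active chain's lag.

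The main obstacle I anticipate is handling the rotation correctly in the stationarity computation: when the active chain completes and shifts into slot $1$, I must confirm that its distribution collapses from the length-biased anytime form to the clean target $\pi$, and that the newly simulating chain inherits the correct $\bar{F}_{\tau}$-weighted lag structure. The subtlety is that the length bias attaches to whichever chain is \emph{currently} simulating, so the factorisation in~\eqref{eq:product-anytime} must be shown invariant under the relabelling that the jump performs. Verifying that the $\alpha$-factor and the $\prod\pi$-factors swap roles consistently at each jump—so that exactly one chain always carries the bias—is the crux; once the rotation is shown to preserve this ``one biased chain, $K$ clean chains'' invariant, the result follows by reduction to the single-chain Proposition~\ref{prop:invariant}.
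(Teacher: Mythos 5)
Your proposal is correct and follows essentially the same route as the paper: a direct verification that $A$ is preserved by the small-time transition kernel, decomposed into its jump and no-jump parts, with the active coordinate reduced to the single-chain computation of Proposition \ref{prop:invariant}, the fresh draw handled by $\pi$-invariance of $\kappa$, and the relabelling of the biased coordinate at a jump justified by Corollary \ref{prop:anytime-to-target} exactly as you anticipate. One small bookkeeping slip in your narrative---after a jump the newly active chain is slot $K+1$ (now holding the former $x^{K}$), not the freshly drawn state in slot $1$---does not affect the argument, since your formal statement of the rotation kernel is correct.
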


This result for the joint process extrapolates Proposition \ref{prop:invariant}.
To see this, note that the key ingredients of Proposition \ref{prop:invariant},
$\bar{F}_{\tau}(l\mid x)$ and $\pi(\mathrm{d}x)$, are now replaced
with $\bar{F}_{\tau}(l\mid x^{K+1})$ and $\Pi(\mathrm{d}x^{1:K+1})$.
The hold-time survival function for the product chain is $\bar{F}_{\tau}(l\mid x^{K+1})$,
since it related to the time taken to execute the kernel $\kappa(\mathrm{d}x_{n}^{1}\mid x_{n-1}^{K+1})$,
which depends on the state of chain $K+1$ only.
\begin{cor}
\label{prop:product-length-bias}With respect to $\Pi(\mathrm{d}x^{1:K+1})$,
$A(\mathrm{d}x^{1:K+1})$ is length-biased by expected hold time on
the extra state $X^{K+1}$ only:
\[
A(\mathrm{d}x^{1:K+1})=\alpha(\mathrm{d}x^{K+1})\prod_{k=1}^{K}\pi(\mathrm{d}x^{k}).
\]
\end{cor}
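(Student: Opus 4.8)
The plan is to obtain $A(\mathrm{d}x^{1:K+1})$ from the joint stationary distribution $A(\mathrm{d}x^{1:K+1},\mathrm{d}l)$ of Proposition \ref{prop:product-invariant} by marginalising out the lag variable $l$, in direct analogy with how Corollary \ref{prop:length-bias} is obtained from Proposition \ref{prop:invariant}. Since Proposition \ref{prop:product-invariant} is assumed already proved, the work here is purely the integration over $l\in[0,\infty)$.

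First I would write down the joint stationary distribution from \eqref{eq:product-anytime} and expand the single-chain factor $\alpha(\mathrm{d}x^{K+1},\mathrm{d}l)$ using its explicit form from Proposition \ref{prop:invariant}, namely $\alpha(\mathrm{d}x^{K+1},\mathrm{d}l)=\bar{F}_{\tau}(l\mid x^{K+1})\,\pi(\mathrm{d}x^{K+1})\,\mathrm{d}l\big/\mathbb{E}_{\tau}[H]$. The product factor $\prod_{k=1}^{K}\pi(\mathrm{d}x^{k})$ does not involve $l$, so it passes through the $l$-integral unchanged. The only $l$-dependent quantity is the survival function $\bar{F}_{\tau}(l\mid x^{K+1})$ attached to the extra state.

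Next I would carry out the integration $\int_{0}^{\infty}\bar{F}_{\tau}(l\mid x^{K+1})\,\mathrm{d}l=\mathbb{E}_{\tau}[H\mid x^{K+1}]$, using the standard tail-integral identity for the expectation of a nonnegative random variable; assumption (2) guarantees finiteness of this integral. This collapses the $(\mathrm{d}x^{K+1},\mathrm{d}l)$ factor precisely to $\alpha(\mathrm{d}x^{K+1})=\mathbb{E}_{\tau}[H\mid x^{K+1}]\,\pi(\mathrm{d}x^{K+1})\big/\mathbb{E}_{\tau}[H]$, which is exactly the single-chain length-biased marginal of Corollary \ref{prop:length-bias}. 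Multiplying by the untouched product factor yields
\[
A(\mathrm{d}x^{1:K+1})=\alpha(\mathrm{d}x^{K+1})\prod_{k=1}^{K}\pi(\mathrm{d}x^{k}),
\]
as claimed. I expect no genuine obstacle: the result is an immediate consequence of the product structure of \eqref{eq:product-anytime}, in which $l$ couples only to the extra coordinate $x^{K+1}$, so the marginalisation factorises cleanly and reduces to the already-established single-chain computation. The only point requiring care is citing assumption (2) to justify that the tail integral is finite, and noting that the first $K$ factors are genuine probability measures in $\pi$ and hence require no normalisation.
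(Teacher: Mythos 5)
Your proposal is correct and matches the paper's approach: the paper proves the single-chain Corollary \ref{prop:length-bias} by exactly this tail-integral marginalisation $\int_{0}^{\infty}\bar{F}_{\tau}(l\mid x)\,\mathrm{d}l=\mathbb{E}_{\tau}[H\mid x]$, and the product-chain version follows immediately from Proposition \ref{prop:product-invariant} because $l$ couples only to $x^{K+1}$, just as you argue. Your added care about assumption (2) guaranteeing finiteness of the tail integral is a sensible, if minor, refinement.
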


\begin{cor}
\label{prop:product-anytime-to-target}Let $(X^{1:K+1},L)\sim A$,
then the conditional probability distribution of $X^{1:K+1}$ given
$L<\epsilon$ is $\Pi(\mathrm{d}x^{1:K+1})$.
\end{cor}

We state without proof that the product chain construction also satisfies
the ergodic theorem under the same assumptions as Proposition\ \ref{prop:ergodic}.
Numerical validation is given in Section \ref{sec:toy-experiments}.

The practical implication of these properties is that any MCMC algorithm
running $K\geq1$ chains can be converted into an anytime MCMC algorithm
by interleaving one extra chain. When the computation is interrupted
at some time $t$, the state of the extra chain is distributed according
to $\alpha$, while the states of the remaining $K$ chains are independently
distributed according to $\pi$. The state of the extra chain is simply
discarded to eliminate the length bias.

\section{Methods\label{sec:methods}}

It is straightforward to apply the above framework to design anytime
MCMC algorithms. One simply runs $K+1$ chains of the desired MCMC
algorithm on a single processor, using a serial schedule, and eliminating
the state of the extra chain whenever the computation is interrupted
at some real time $t$. The anytime framework is particularly useful
within a broader SMC method~\citep{DelMoral2006}, where there already
exist multiple chains (particles) with which to establish anytime
behaviour. We propose appropriate SMC methods in this section.

\subsection{SMC}

We are interested in the context of sequential Bayesian inference
targeting a posterior distribution $\pi(\mathrm{d}x)=p(\mathrm{d}x\mid y_{1:V})$
for a given data set $y_{1:V}$. For the purposes of SMC, we assume
that the target distribution $\pi(\mathrm{d}x$) admits a density
$\pi(x)$ in order to compute importance weights.

Define a sequence of target distributions $\pi_{0}(\mathrm{d}x)=p(\mathrm{d}x)$
and $\pi_{v}(\mathrm{d}x)\propto p(\mathrm{d}x\mid y_{1:v})$ for
$v=1,\ldots,V$. The first target equals the prior distribution, while
the final target equals the posterior distribution, $\pi_{V}(\mathrm{d}x)=p(\mathrm{d}x\mid y_{1:V})=\pi(\mathrm{d}x)$.
Each target $\pi_{v}$ has an associated Markov kernel $\kappa_{v}$,
invariant to that target, which could be defined using an MCMC algorithm.

An SMC algorithm propagates a set of weighted samples (particles)
through the sequence of target distributions. At step $v$, the target
$\pi_{v}(\mathrm{d}x)$ is represented by an empirical approximation
$\hat{\pi}_{v}(\mathrm{d}x)$, constructed with $K$ number of samples
$x_{v}^{1:K}$ and their associated weights $w_{v}^{1:K}$: 
\begin{equation}
\hat{\pi}_{v}(\mathrm{d}x)=\frac{\sum_{k=1}^{K}w_{v}^{k}\delta_{x_{v}^{k}}(\mathrm{d}x)}{\sum_{k=1}^{K}w_{v}^{k}}.\label{eq:empirical-approx}
\end{equation}
A basic SMC algorithm proceeds as in Algorithm \ref{alg:conventional-smc}.

\begin{algorithm}[tph]
\begin{enumerate}
\item Initialise $x_{0}^{k}\sim\pi_{0}(\mathrm{d}x_{0})$.
\item For $v=1,\ldots,V$ 
\begin{enumerate}
\item Set $x_{v}^{k}=x_{v-1}^{k}$ and weight $w_{v}^{k}=\pi_{v}(x_{v}^{k})/\pi_{v-1}(x_{v}^{k})\propto p(y_{v}\mid x_{v}^{k},y_{1:v-1})$,
to form the empirical approximation $\hat{\pi}_{v}(\mathrm{d}x_{v})\approx\pi_{v}(\mathrm{d}x_{v})$.
\item Resample $x_{v}^{k}\sim\hat{\pi}_{v}(\mathrm{d}x_{v})$.
\item Move $x_{v}^{k}\sim\kappa_{v}(\mathrm{d}x_{v}\mid x_{v}^{k})$ for
$n_{v}$ steps. 
\end{enumerate}
\end{enumerate}
\caption{A basic SMC algorithm. Where $k$ appears, the operation is performed
for all $k\in\{1,\ldots,K\}$.\label{alg:conventional-smc}}
\end{algorithm}

An extension of the algorithm concerns the case where the sequence
of target distributions requires marginalising over some additional
latent variables. In these cases, a pseudomarginal approach~\citep{Andrieu2009}
can be adopted, replacing the exact weight computations with unbiased
estimates~\citep{Fearnhead2008,Fearnhead2010a}. For example, for
a state-space model at step $v$, there are $v$ hidden states $z_{1:v}\in\mathbb{Z}^{v}$
to marginalise over: 
\[
\pi_{v}(\mathrm{d}x)=\int_{\mathbb{Z}^{v}}\pi_{v}(\mathrm{d}x,\mathrm{d}z_{1:v}).
\]
 Unbiased estimates of this integral can be obtained by a nested SMC
procedure targeting $\pi_{v}(\mathrm{d}z_{1:v}\mid x)$, leading to
the algorithm known as SMC$^{2}$~\citep{Chopin2013}, where the
kernels $\kappa_{v}$ are particle MCMC moves~\citep{Andrieu2010}.
This is used for the example of Section \ref{sec:lorenz96-experiments}.

\subsection{SMC with anytime moves}

In the conventional SMC algorithm, it is necessary to choose $n_{v}$,
the number of kernel moves to make per particle in step 2(c). For
anytime moves, this is replaced with a real-time budget $t_{v}$.
Move steps are typically the most expensive steps\textemdash certainly
so for SMC$^{2}$\textemdash with the potential for significant variability
in the time taken to move each particle. An anytime treatment provides
control over the budget of the move step which, if it is indeed the
most expensive step, provides substantial control over the total budget
also.

The anytime framework is used as follows. Associated with each target
distribution $\pi_{v}$, and its kernel $\kappa_{v}$, is a hold time
distribution $\tau_{v}$, and implied anytime distribution $\alpha_{v}$.
At step $v$, after resampling, an extra particle and lag $(x_{v}^{K+1},l_{v})$
are drawn (approximately) from the anytime distribution $\alpha_{v}$.
The real-time Markov jump process $(X_{v}^{1:K+1},L_{v})(t)$ is then
initialised with these particles and lag, and simulated forward until
time $t_{v}$ is reached. The extra chain and lag are then eliminated,
and the states of the remaining chains are restored as the $K$ particles
$x_{v}^{1:K}$.

The complete algorithm is given in Algorithm \ref{alg:anytime-smc}.

\begin{algorithm}[tph]
\begin{enumerate}
\item Initialise $x_{0}^{k}\sim\pi_{0}(\mathrm{d}x_{0})$.
\item For $v=1,\ldots,V$ 
\begin{enumerate}
\item Set $x_{v}^{k}=x_{v-1}^{k}$ and weight $w_{v}^{k}=\pi_{v}(x_{v}^{k})/\pi_{v-1}(x_{v}^{k})\propto p(y_{v}\mid x_{v}^{k},y_{1:v-1})$,
to form the empirical approximation $\hat{\pi}_{v}(\mathrm{d}x_{v})\approx\pi_{v}(\mathrm{d}x_{v})$.
\item Resample $x_{v}^{k}\sim\hat{\pi}_{v}(\mathrm{d}x_{v})$.
\item Draw (approximately) an extra particle and lag $(x_{v}^{K+1},l_{v})\sim\alpha_{v}(\mathrm{d}x_{v},\mathrm{d}l_{v})$.
Construct the real-time Markov jump process $\left(X_{v}^{1:K+1},L_{v}\right)(0)=(x_{v}^{1:K+1},l_{v})$
and simulate it forward for some real time $t_{v}$. Set $x_{v}^{1:K}=X_{v}^{1:K}(t_{v})$,
discarding the extra particle and lag.
\end{enumerate}
\end{enumerate}
\caption{SMC with anytime moves. Where $k$ appears, the operation is performed
for all $k\in\{1,\ldots,K\}$.\label{alg:anytime-smc}}
\end{algorithm}

By the end of the move step 2(c), as $t_{v}\rightarrow\infty$, the
particles $x_{v}^{1:K+1}$ become distributed according to $A_{v}$,
regardless of their distribution after the resampling step 2(b). This
is assured by Proposition \ref{prop:ergodic}. After eliminating the
extra particle, the remaining $x_{v}^{1:K}$ are distributed according
to $\Pi_{v}$.

In practice, of course, it is necessary to choose some finite $t_{v}$
for which the $x_{v}^{1:K}$ are distributed only approximately according
to $\Pi_{v}$. For any given $t_{v}$, their divergence in distribution
from $\Pi_{v}$ is minimised by an initialisation as close as possible
to $A_{v}$. We have, already, the first $K$ chains initialised from
an empirical approximation of the target, $\hat{\pi}_{v}$, which
is unlikely to be improved upon. We need only consider the extra particle
and lag.

An easily-implemented choice is to draw $(x_{v}^{K+1},l_{v})\sim\hat{\pi}_{v}(\mathrm{d}x_{v})\delta_{0}(\mathrm{d}l_{v})$.
In practice, this merely involves resampling $K+1$ rather than $K$
particles in step 2(b), setting $l_{v}=0$ and proceeding with the
first move.

An alternative is $(x_{v}^{K+1},l_{v})\sim\delta_{x_{v-1}^{K+1}}(\mathrm{d}x_{v})\delta_{l_{v-1}}(\mathrm{d}l_{v})$.
This resumes the computation of the extra particle that was discarded
at step $v-1$. As $t_{v-1}\rightarrow\infty$, it amounts to approximating
$\alpha_{v}$ by $\alpha_{v-1}$, which is sensible if the sequence
of anytime distributions changes only slowly. 

\subsection{Distributed SMC with anytime moves}

While the potential to parallelise SMC is widely recognised~\citep[see e.g.][]{Lee2010,Murray2015},
the resampling step 2(b) in Algorithm \ref{alg:conventional-smc}
is acknowledged as a potential bottleneck when in a distributed computing
environment of $P$ number of processors. This is due to collective
communication: all processors must synchronise after completing the
preceding steps in order for resampling to proceed. Resampling cannot
proceed until the slowest among them completes. As this is a maximum
among $P$ processors, the expected wait time increases with $P$.
Recent work has considered either global pairwise interaction~\citep{Murray2011a,Murray2016}
or limited interaction~\citep{Verge2015,Whiteley2016,Lee2016} to
address this issue. Instead, we propose to preserve collective communication,
but to use an anytime move step to ensure the simultaneous readiness
of all processors for resampling.

SMC with anytime moves is readily distributed across multiple processors.
The $K$ particles are partitioned so that processor $p\in\left\{ 1,\ldots,P\right\} $
has some number of particles, denoted $K^{p}$, and so that $\sum_{p=1}^{P}K^{p}=K$.
Each processor can proceed with initialisation, move and weight steps
independently of the other processors. After the resampling step,
each processor has $K^{p}$ number of particles. During the move step,
each processor draws its own extra particle and lag from the anytime
distribution, giving it $K^{p}+1$ particles, and discards them at
the end of the step leaving it with $K^{p}$ again. Collective communication
is required for the resampling step, and an appropriate distributed
resampling scheme should be used~\citep[see e.g.][]{Bolic2005,Verge2015,Lee2016}. 

In the simplest case, all workers have homogeneous hardware and the
obvious partition of particles is $K^{p}=K/P$. For heterogeneous
hardware another partition may be set \emph{a priori}~\citep[see e.g.][]{Rosenthal2000}.
Note also that with heterogenous hardware, each processor may have
a different compute capability and therefore different distribution
$\tau_{v}$. For processor $p$, we denote this $\tau_{v}^{p}$ and
the associated anytime distribution $\alpha_{v}^{p}$. This difference
between processors is easily accommodated, as the anytime treatment
is local to each processor.

A distributed SMC algorithm with anytime moves proceeds as in Algorithm
\ref{alg:distributed-anytime-smc}.

\begin{algorithm}[tph]
\begin{enumerate}
\item On each processor $p$, initialise $x_{0}^{k}\sim\pi_{0}(\mathrm{d}x)$.
\item For $v=1,\ldots,V$ 
\begin{enumerate}
\item On each processor $p$, set $x_{v}^{k}=x_{v-1}^{k}$ and weight $w_{v}^{k}=\pi_{v}(x_{v}^{k})/\pi_{v-1}(x_{v}^{k})\propto p(y_{v}\mid x_{v}^{k},y_{1:v-1})$.
Collectively, all $K$ particles form the empirical approximation
$\hat{\pi}_{v}(\mathrm{d}x_{v})\approx\pi_{v}(\mathrm{d}x_{v})$.
\item Collectively resample $x_{v}^{k}\sim\hat{\pi}_{v}(\mathrm{d}x_{v})$
and redistribute the particles among processors so that processor
$p$ has exactly $K^{p}$ particles again.
\item On each processor $p$, draw (approximately) an extra particle and
lag $(x_{v}^{K^{p}+1},l_{v})\sim\alpha_{v}^{p}(\mathrm{d}x,\mathrm{d}l)$.
Construct the real-time process $\left(X_{v}^{1:K^{p}+1},L_{v}\right)(0)=(x_{v}^{1:K^{p}+1},l_{v})$
and simulate it forward for some real time $t_{v}$. Set $x_{v}^{1:K^{p}}=X_{v}^{1:K^{p}}(t_{v})$,
discarding the extra particle and lag.
\end{enumerate}
\end{enumerate}
\caption{SMC with anytime moves. Where $k$ appears, the operation is performed
for all $k\in\{1,\ldots,K^{p}\}$.\label{alg:distributed-anytime-smc}}
\end{algorithm}

The preceding discussion around the approximate anytime distribution
still holds \emph{for each processor in isolation}: for any given
budget $t_{v}$, to minimise the divergence between the distribution
of particles and the target distribution, $\hat{A}_{v}^{p}$ should
be chosen as close as possible to $A_{v}^{p}$.

\subsection{Setting the compute budget}

We set an overall compute budget for move steps, which we denote $t$,
and apportion this into a quota for each move step $v$, which we
denote $t_{v}$ as above. This requires some \emph{a priori} knowledge
of the compute profile for the problem at hand.

Given $\hat{\pi}_{v-1}$, if the compute time necessary to obtain
$\hat{\pi}_{v}$ is constant with respect to $v$, then a suitable
quota for the $v$th move step is the obvious $t_{v}=t/V$. If, instead,
the compute time grows linearly in $v$, as is the case for SMC$^{2}$,
then we expect the time taken to complete the $v$th step to be proportional
to $v+c$ (where the constant $c$ is used to capture overheads).
A sensible quota for the $v$th move step is then 
\begin{equation}
t_{v}=\left(\frac{v+c}{\sum_{u=1}^{V}(u+c)}\right)t=\left(\frac{2(v+c)}{V(V+2c+1)}\right)t.\label{eq:apportion}
\end{equation}
For the constant, a default choice might be $c=0$; higher values
shift more time to earlier time steps.

This approximation does neglect some complexities. The use of memory-efficient
path storage~\citep{Jacob2015}, for example, introduces a time-dependent
contribution of $\mathcal{O}(K)$ at $v=1$, increasing to $\mathcal{O}(K\log K)$
with $v$ as the ancestry tree grows. Nonetheless, for the example
of Section \ref{sec:lorenz96-experiments} we observe, anecdotally,
that this partitioning of the time budget produces surprisingly consistent
results with respect to the random number of moves completed at each
move step $v$.

\subsection{Resampling considerations}

To reduce the variance in resampling outcomes~\citep{Douc2005},
implementations of SMC often use schemes such as systematic, stratified~\citep{Kitagawa1996}
or residual~\citep{Liu1998} resampling, rather than the multinomial
scheme~\citep{Gordon1993} with which the above algorithms have been
introduced. The implementation of these alternative schemes does not
necessarily leave the random variables $X_{v}^{1},\ldots,X_{v}^{K}$
exchangeable; for example, the offspring of a particle are typically
neighbours in the output of systematic or stratified resampling~\citep[see e.g.][]{Murray2016}.

Likewise, distributed resampling schemes do not necessarily redistribute
particles identically between processors. For example, the implementation
in LibBi~\citep{Murray2015} attempts to minimise the transport of
particles between processors, such that the offspring of a parent
particle are more likely to remain on the same processor as that particle.
This means that the distribution of the $K^{p}$ particles on each
processor may have greater divergence from $\pi_{v}$ than the distribution
of the $K$ particles overall.

In both cases, the effect is that particles are initialised further
from the ideal $A_{v}$. Proposition \ref{prop:ergodic} nonetheless
ensures consistency as $t_{v}\rightarrow\infty$. A random permutation
of particles may result in a better initialisation, but this can be
costly, especially in a distributed implementation where particles
must be transported between processors. For a fixed total budget,
the time spent permuting may be better spent by increasing $t_{v}$.
The optimal choice is difficult to identify in general; we return
to this point in the discussion.

\section{Experiments\label{sec:experiments}}

This section presents two case studies to empirically investigate
the anytime framework and the proposed SMC methods. The first uses
a simple model where real-time behaviour is simulated in order to
validate the results of Section \ref{sec:framework}. The second considers
a Lorenz '96 state-space model with non-trivial likelihood and compute
profile, testing the SMC methods of Section \ref{sec:methods} in
two real-world computing environments.

\subsection{Simulation study\label{sec:toy-experiments}}

Consider the model 
\begin{eqnarray*}
X & \sim & \mathrm{Gamma}(k,\theta)\\
H\mid x & \sim & \mathrm{Gamma}(x^{p}/\theta,\theta),
\end{eqnarray*}
with shape parameter $k$, scale parameter $\theta$, and polynomial
degree $p$. The two Gamma distributions correspond to $\pi$ and
$\tau$, respectively. The anytime distribution is:
\[
\alpha(\mathrm{d}x)=\frac{\mathbb{E}_{\tau}[H\mid x]}{\mathbb{E}_{\tau}[H]}\pi(\mathrm{d}x)\propto x^{k+p-1}\exp\left(-\frac{x}{\theta}\right)\mathrm{d}x,
\]
which is $\mathrm{Gamma}(k+p,\theta)$.

Of course, in real situations, $\tau$ is not known explicitly, and
is merely implied by the algorithm used to simulate $X$. For this
first study, however, we assume the explicit form above and simulate
virtual hold times. This permits exploration of the real-time effects
of polynomial computational complexity in a controlled environment,
including constant ($p=0$), linear ($p=1$), quadratic ($p=2$) and
cubic ($p=3$) complexity.

To construct a Markov chain $(X_{n})_{n=0}^{\infty}$ with target
distribution $\mathrm{Gamma}(k,\theta)$, first consider a Markov
chain $(Z_{n})_{n=0}^{\infty}$ with target distribution $\mathcal{N}(0,1)$
and kernel 
\[
Z_{n}\mid z_{n-1}\sim\mathcal{N}(\rho z_{n-1},1-\rho^{2}),
\]
where $\rho$ is an autocorrelation parameter. Now define $(X_{n})_{n=0}^{\infty}$
as 
\[
x_{n}=F_{\gamma}^{-1}\left(F_{\phi}(z_{n});\,k,\theta\right),
\]
where $F_{\gamma}^{-1}$ is the inverse cdf of the Gamma distribution
with parameters $k$ and $\theta$, and $F_{\phi}$ is the cdf of
the standard normal distribution. By construction, $\rho$ parameterises
a Gaussian copula inducing correlation between adjacent elements of
$(X_{n})_{n=0}^{\infty}$.

For the experiments in this section, we set $k=2$, $\theta=1/2$,
$\rho=1/2$ and use $p\in\left\{ 0,1,2,3\right\} $. In all cases
Markov chains are initialised from $\pi$ and simulated for 200 units
of virtual time.

We employ the 1-Wasserstein distance to compare distributions. For
two univariate distributions $\mu$ and $\nu$ with associated cdfs
$F_{\mu}(x)$ and $F_{\nu}(x)$, the 1-Wasserstein distance $d_{1}(F_{\mu},F_{\nu})$
can be evaluated as~\citep[p.64]{Shorack1986}
\[
d_{1}(F_{\mu},F_{\nu})=\int_{-\infty}^{\infty}\left|F_{\mu}(x)-F_{\nu}(x)\right|\,\mathrm{d}x,
\]
which, for the purposes of this example, is sufficiently approximated
by numerical integration. The first distribution will always be the
empirical distribution of a set of $n$ samples, its cdf denoted $F_{n}(x)$.
If those samples are distributed according to the second distribution,
the distance will go to zero as $n$ increases.

\subsubsection{Validation of the anytime distribution}

We first validate, empirically, that the anytime distribution is indeed
$\textrm{Gamma}(k+p,\theta)$ as expected. We simulate $n=2^{18}$
Markov chains. At each integer time we take the state of all $n$
chains to construct an empirical distribution. We then compute the
1-Wasserstein distance between this and the anytime distribution,
using the empirical cdf $F_{n}(x)$ and anytime cdf $F_{\alpha}(x)$.

Figure \ref{fig:toy-anytime} plots the results over time. In all
cases the distance goes to zero in $t$, slower for larger $p$. This
affirms the theoretical results obtained in Section \ref{sec:framework}.

\begin{figure}[tp]
\includegraphics[width=1\textwidth]{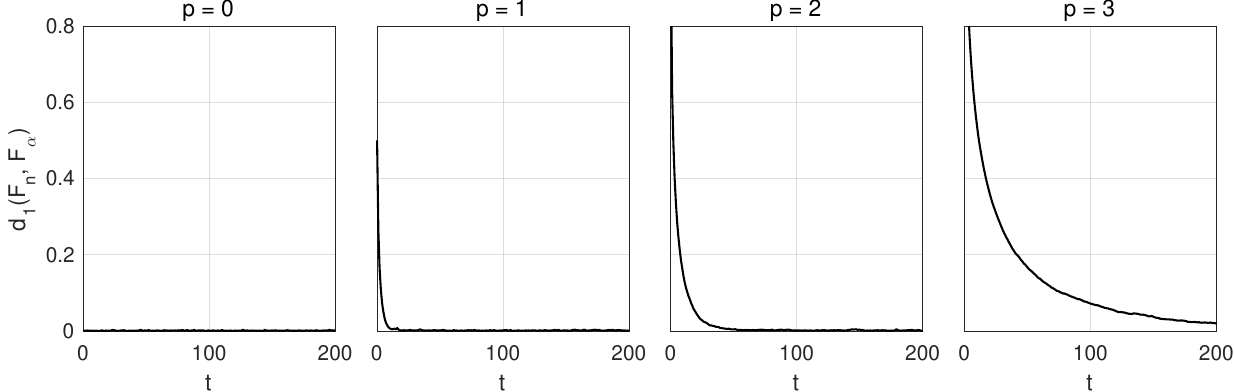}

\caption{Convergence of Markov chains to the anytime distribution for the simulation
study, with constant ($p=0$), linear ($p=1$), quadratic ($p=2$)
and cubic ($p=3$) expected hold time. Each plot shows the evolution
of the 1-Wasserstein distance between the anytime distribution and
the empirical distribution of $2^{18}$ independent Markov chains
initialised from the target distribution.\label{fig:toy-anytime}}
\end{figure}

\subsubsection{Validation of the multiple chain strategy}

We next check the efficacy of the multiple chain strategy in eliminating
length bias. For $K+1\in\left\{ 2,4,8,16,32\right\} $, we initialise
$K+1$ chains and simulate them forward in a serial schedule. For
$n=2^{18}$, this is repeated $n/(K+1)$ times. We then consider ignoring
the length bias versus correcting for it. In the first case, we take
the states of all $K+1$ chains at each time, giving $n$ samples
from which to construct an empirical cdf $F_{n}(x)$. In the second
case, we eliminate the extra chain but keep the remaining $K$, giving
$nK/(K+1)$ samples from which to construct an empirical cdf $F_{nK/(K+1)}(x)$.
In both cases we compute the 1-Wasserstein distance between the empirical
and target distributions, using the appropriate empirical cdf, and
the target cdf $F_{\pi}(x)$.

Figure \ref{fig:toy-target} plots the results over time for both
the uncorrected (top) and corrected (bottom) cases. For the uncorrected
case, the 1-Wasserstein distance between the empirical distribution
and target distribution does not converge to zero. Neither does it
become arbitrarily bad: the distance is due to one of the $K+1$ chains
being distributed according to $\alpha$ and not $\pi$, the influence
of which decreases as $K$ increases.

For the corrected case, where the extra chain is eliminated, the distance
converges to zero in time. This confirms the efficacy of the multiple
chain strategy in yielding an anytime distribution equal to the target
distribution.

\begin{figure}[t]
\includegraphics[width=1\textwidth]{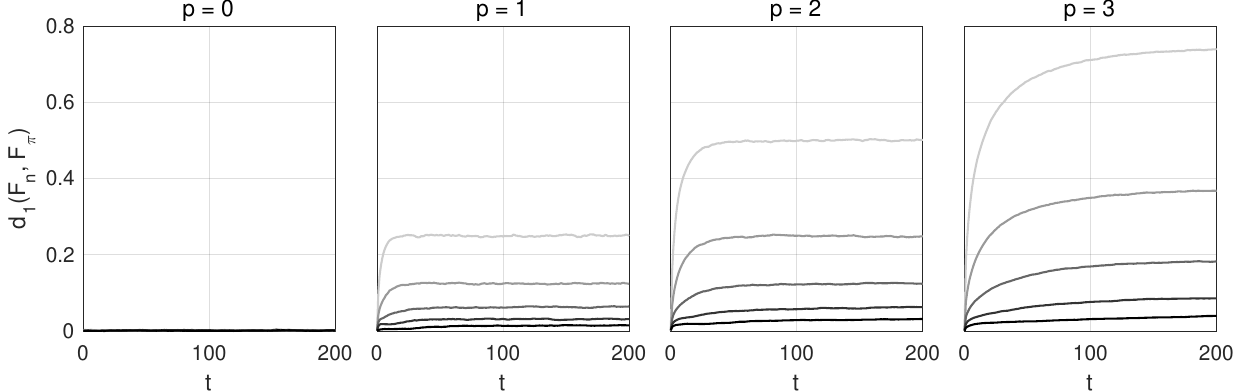}

\includegraphics[width=1\textwidth]{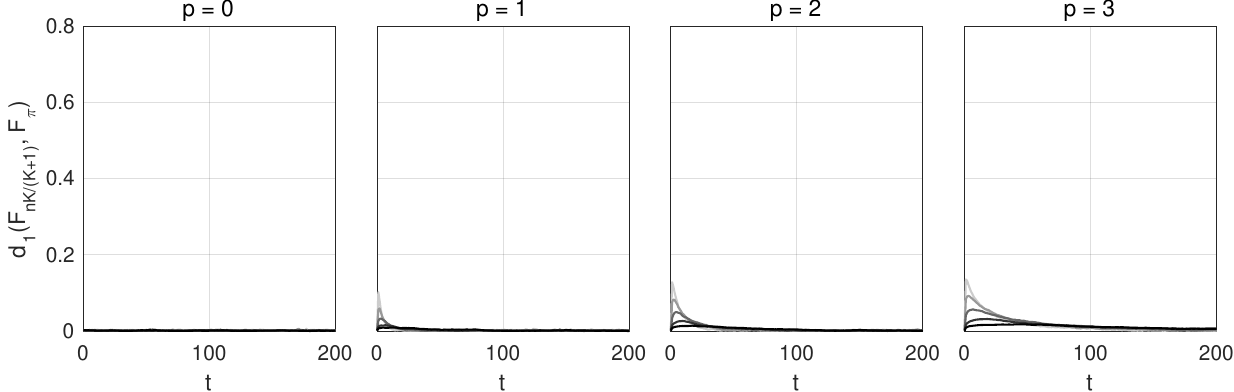}

\caption{Correction of length bias for the simulation study, using $K+1\in\left\{ 2,4,8,16,32\right\} $
chains (light to dark), with constant ($p=0$), linear ($p=1$), quadratic
($p=2$) and cubic ($p=3$) expected hold time. Each plot shows the
evolution of the 1-Wasserstein distance between the empirical and
target distributions. On the top row, the states of all chains contribute
to the empirical distribution, which does not converge to the target.
On the bottom row, the state of the extra chain is eliminated, contributing
only the remaining states to the empirical distribution, which does
converge to the target. \label{fig:toy-target}}
\end{figure}

\subsection{Distributed computing study\label{sec:lorenz96-experiments}}

Consider a stochastic extension of the deterministic Lorenz '96~\citep{Lorenz2006}
model described by the stochastic differential equation (SDE) 
\begin{equation}
\mathrm{d}X_{d}=\left(X_{d-1}(X_{d+1}-X_{d-2})-X_{d}+F\right)\,\mathrm{d}t+\sigma\,\mathrm{d}W_{d},\label{eqn:lorenz96-sde}
\end{equation}
with parameter $F$, constant $\sigma$, state vector $\mathbf{X}(t)\in\mathbb{R}^{D}$
and Wiener process vector $\mathbf{W}(t)\in\mathbb{R}^{D}$, with
elements of those vectors indexed cyclically by subscripts (i.e. $X_{d-D}\equiv X_{d}\equiv X_{d+D}$).
The SDE may be equivalently interpreted in the Ito or Stratonovich
sense, as the noise term is additive~\citep[p157]{Kloeden1992}.
The observation model is given by 
\[
Y_{d}(t)\sim\mathcal{N}(x_{d}(t),\varsigma^{2}).
\]
We fix $D=8$, $\sigma^{2}=10^{-4}$, $\varsigma^{2}=10^{-6}$ and
set a prior on the parameter $F$ and initial conditions $\mathbf{X}(0)$
of: 
\begin{eqnarray*}
F & \sim & \mathcal{U}([0,7])\\
X_{d}(0) & \sim & \mathcal{N}(0,\sigma^{2}).
\end{eqnarray*}

The SDE can be approximately decomposed into a deterministic drift
component given by the ordinary differential equation (ODE) 
\[
\frac{\mathrm{d}x_{d}}{\mathrm{d}t}=x_{d-1}(x_{d+1}-x_{d-2})-x_{d}+F,
\]
and a diffusion component given by the Wiener process. On a fixed
time step $\Delta t=5\times10^{-2}$, the drift component is first
simulated using an appropriate numerical scheme for ODEs. Then, a
Wiener process increment $\Delta W_{d}\sim\mathcal{N}(0,\Delta t)$
is simulated and added to the result. This numerical scheme yields
a result similar to that of Euler\textendash Maruyama for the original
SDE but, for drift, substitutes the usual first-order Euler method
with a higher-order Runge\textendash Kutta method. This is advantageous
in low-noise regimes where $\sigma$ is close to zero, as here. In
such cases the dynamics are drift-dominated and can benefit from the
higher-order scheme \citep[see e.g.][ch. 3]{Milstein2004}.

The RK4(3)5{[}2R+{]}C algorithm of \citet{Kennedy2000} is used to
simulate the drift. This provides a fourth order solution to the ODE
with an embedded third order solution for error estimates. Adaptive
step-size adjustment is then used as in \citet{Hairer1993}. The complete
method is implemented~\citep{Murray2012} on a graphics processing
unit (GPU) as in the LibBi software~(\url{www.libbi.org}, \citealp{Murray2015}).

The Lorenz '96 model exhibits intricate qualitative behaviours that
depend on the parameter $F$. These range from decay, to periodicity,
to chaos and back again (Figure \ref{fig:lorenz96}, top row). With
an adaptive step-size adjustment, the number of steps required to
simulate trajectories within given error bounds generally increases
with $F$, so that compute time does also.

We produce a data set by setting $F=4.8801$, simulating a single
trajectory for 10 time units and taking partial observations $\mathbf{Y}_{1:4}(t)$
every 0.4 time units. This gives 100 observations in total. We then
use SMC$^{2}$ to attempt to recover the correct posterior distribution
over $F$ given this data set. This is non-trivial: this particular
value of $F$ is in a region where the qualitative behaviour of the
Lorenz '96 model appears to switch frequently, in $F$, between periodic
and chaotic regimes (Figure \ref{fig:lorenz96}, top right), suggesting
that the marginal likelihood may not be smooth in the region of the
posterior, and inference may be difficult.

The marginal likelihood $p(y\mid F)$ cannot be computed exactly,
but it can be unbiasedly estimated with SMC. For each value of $F$
on a regular grid, we run SMC with $2^{20}$ particles to estimate
the marginal likelihood. These estimates are shown in the middle left
of Figure \ref{fig:lorenz96}. The likelihood is clearly multi-modal,
and the estimator heteroskedastic. Nevertheless, the variance in the
estimator is tolerable in the region of the posterior distribution
(middle right of Figure \ref{fig:lorenz96}), suggesting that $F$
can be recovered. The real time taken to compute these estimates is
shown in the lower plots of Figure \ref{fig:lorenz96}. The computations
were performed in a random order through the grid points on $F$ so
as to decorrelate $F$ with any transient exogenous effects on compute
time. There appears, in fact, to have been some such effect: note
the dotted line of points above the bulk on each plot, suggesting
that a subset of runs have been slowed. This is most likely due to
a contesting process on the shared server on which these computations
were run. As expected, compute time tends to increase in $F$ (after
an initial plateau where other factors dominate). Furthermore, variance
appears to increase with $F$ in the higher regions.

\begin{figure}[p]
\includegraphics[width=1\textwidth]{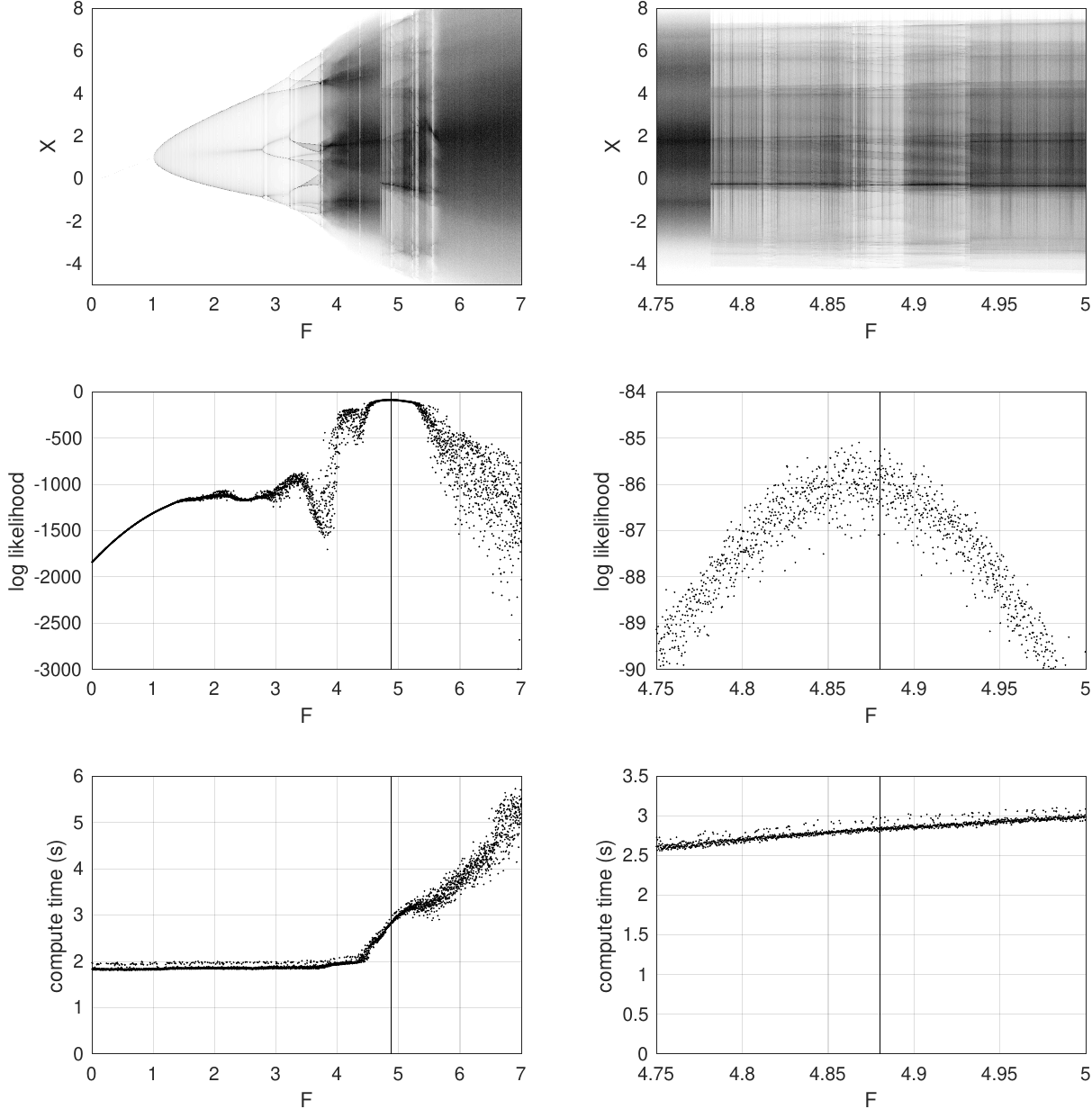}

\caption{Elucidating the Lorenz '96 model. The left column shows the range
$F\in[0,7]$ as in the uniform prior distribution, while the right
column shows a narrower range of $F$ in the vicinity of the posterior distribution.
The solid vertical lines indicate the value $F=4.8801$, with which
data is simulated. The first row is a bifurcation diagram depicting
the stationary distribution of any element of $\mathbf{X}(t)$ for
various values of $F$. Each column is a density plot for a particular
value of $F$; darker for higher density values, scaled so that the
mode is black. Note the intricate behaviours of decay, periodicity
and chaos induced by $F$. The second row depicts estimates of the
marginal log-likelihood of the simulated data set for the same values
of $F$, using SMC with $2^{20}$ particles. Multiple modes and heteroskedasticity
are apparent. The third row depicts the compute time taken to obtain
these estimates, showing increasing compute time in $F$ after an
initial plateau.\label{fig:lorenz96}}
\end{figure}

We now run SMC$^{2}$ using the LibBi software on two platforms: 
\begin{enumerate}
\item A shared-memory machine with 8 GPUs, each with 1536 cores, for approximately
12000-way parallelism, using $2^{10}$ particles for $F$, each with
$2^{20}$ particles for $\mathbf{X}(t)$, for approximately one billion
particles overall. This is a shared machine where contestation from
other jobs is expected.
\item A distributed-memory cluster on the Amazon EC2 service, with 128 GPUs,
each with 1536 cores, for approximately 200000-way parallelism, using
$2^{12}$ particles for $F$, each with $2^{20}$ particles for $\mathbf{X}(t)$,
for approximately four billion particles overall. This is a dedicated
cluster where contestation from other jobs is not expected. 
\end{enumerate}
In order to obtain a more repeatable comparison between conventional
SMC$^{2}$ and SMC$^{2}$ with anytime moves, we choose to use the
same number of samples of $\mathbf{X}(t)$ for all time steps, rather
than adapting this in time as recommended in \citet{Chopin2011}.
For the same reason, we resample at all steps rather than use an adaptive
trigger. With anytime moves, the extra particle and lag are drawn
as $(x_{v}^{K+1},l_{v})\sim\hat{\pi}_{v}(\mathrm{d}x_{v})\delta_{0}(\mathrm{d}l_{v})$.

We first run conventional SMC$^{2}$, making $n_{v}=10$ moves per
particle at each step $v$. We then run SMC$^{2}$ with anytime moves,
prescribing a total budget for move steps of 60 minutes for the 8
GPU configuration, and 5 minutes for the 128 GPU configuration, apportioned
as in (\ref{eq:apportion}).

The results of the 128 GPU runs are given in Figure \ref{fig:smc-posteriors}.
Recalling that $F=4.8801$ for the simulated data set, these suggest
that the posterior has indeed been recovered successfully, and there
is no indication that the posterior obtained with anytime move steps
is much different from that obtained using the conventional method.

Compute profiles for the runs are given in Figure \ref{fig:smc-profiles},
showing the busy and wait times of all processors involved in the
computations. We see obvious wait time with conventional SMC$^{2}$,
far more pronounced in the 8 GPU case, where a contesting process
on one processor has encumbered the entire computation. The anytime
move step grants a robustness to this contesting process, and wait
times are significantly reduced. For the 128 GPU case, even in the
absence of such exogenous problems, wait times are noticeably reduced.

\begin{figure}[p]
\includegraphics[width=0.95\textwidth]{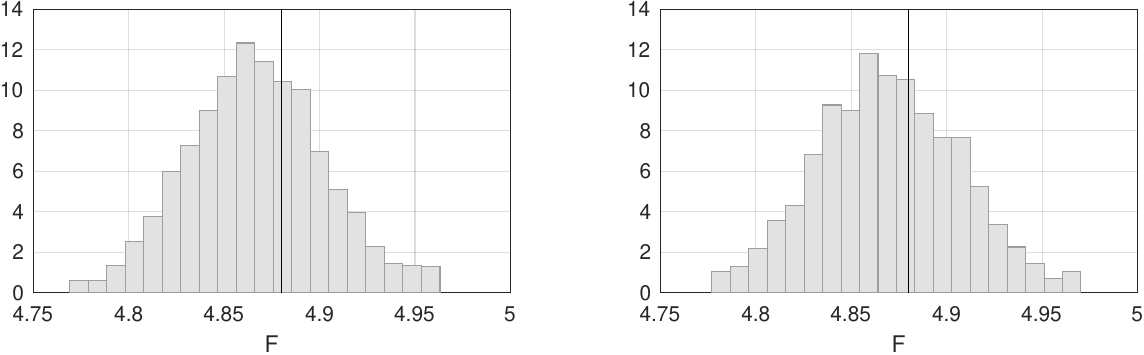}

\caption{Posterior distributions over $F$ for the Lorenz '96 case study. On
the left, from conventional SMC$^{2}$, on the right, from SMC$^{2}$
with anytime moves, both running on the 128 GPU configuration.\label{fig:smc-posteriors}}
\end{figure}

\begin{figure}[p]
\includegraphics[width=0.95\textwidth]{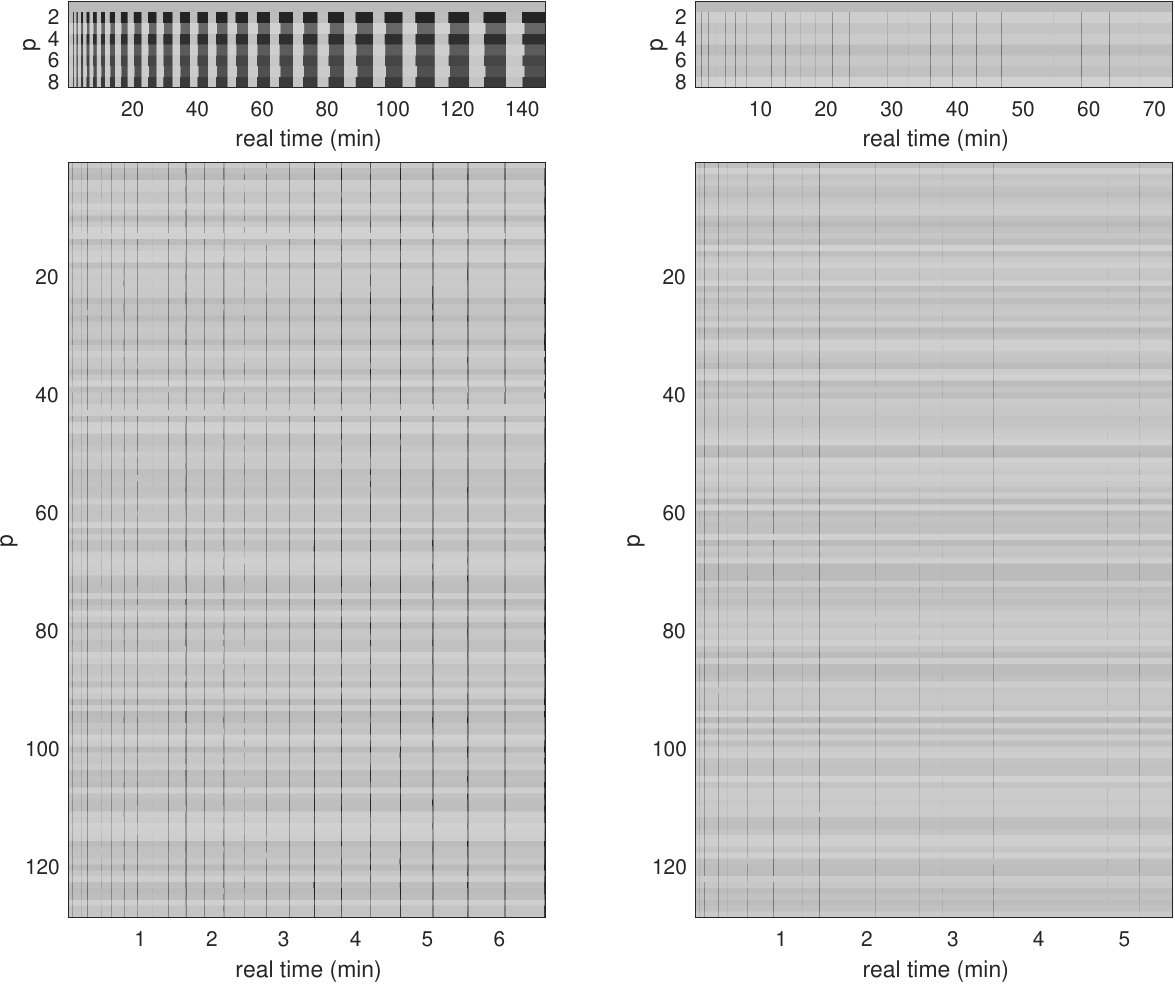}

\caption{Compute profiles for the Lorenz '96 case study. On the left is a conventional
distributed SMC$^{2}$ method with a fixed number of moves per particle
after resampling. On the right is distributed SMC$^{2}$ with anytime
move steps. Each row represents the activity of a single processor
over time: light grey while active and dark grey while waiting. The
top profiles are for an 8 GPU shared system where contesting processes
are expected. The conventional method on the left exhibits significant
idle time on processors 2 to 8 due to a contesting job on processor
1. The two bottom profiles are for the 128 GPU configuration with
no contesting processes. Wait time in the conventional methods on
the left is significantly reduced in the anytime methods on the right. \label{fig:smc-profiles}}
\end{figure}

\section{Discussion\label{sec:discussion}}

The framework presented is a generic means by which any MCMC algorithm\textemdash including
iid sampling as a special case\textemdash can be made an anytime Monte
Carlo algorithm. This facilitates the configuration of Monte Carlo
computation in real-time terms, rather than in the number of simulations.
The benefits of this have been demonstrated in a distributed computing
context where, by setting real-time compute budgets, wait times are
significantly reduced for an SMC algorithm that requires collective
communication. The framework has potential applications elsewhere,
for example as a foundation for real-time, fault-tolerant and energy-constrained
Monte Carlo algorithms, for the management of cloud computing budgets,
or for the fair computational comparison of methods.

We have assumed throughout that an algorithm is given to simulate
the target distribution $\pi$, and that the anytime distribution
$\alpha$ is merely a consequence of this. The aim has then been to
correct the length bias in $\alpha$. This is a pragmatic approach,
as it leverages existing Monte Carlo algorithms. A tantalising alternative
is to develop algorithms that, from the outset, yield $\pi$ as the
anytime distribution. This might be done with an underlying Markov
chain that targets something other than $\pi$ but that, by design,
yields $\pi$ once length biased. We expect, however, that to do this
even approximately will require at least some knowledge of $\tau$,
which will restrict its applicability to specific cases only.

The proposed SMC method uses anytime move steps, but is not a complete
anytime algorithm, as it does not provide control over the total compute
budget. Its objective is to minimise the wait time that precedes resampling
steps in a distributed implementation of SMC. On this account it succeeds.
A complete anytime SMC algorithm (of a conventional structure) will
require, in addition, anytime weighting and anytime resampling steps,
as well as the apportioning of the total compute budget between these.
Because approximations may appear in each of these steps, the apportioning
is not straightforward, and will involve tradeoffs. As already identified,
for example, the redistribution of particles after resampling in a
distributed environment is an expensive operation, and all or part
of that time may be better invested in the budget allocation for anytime
moves. Such investigations have been left to future work. An alternative
means to an anytime SMC algorithm is to use a different structure
to the conventional, as in the particle cascade~\citep{Paige2014}.
Whatever the structure, these anytime algorithms are somewhat more
elaborate than the standard SMC algorithms for which theoretical results
have been established, and may warrant further study.

Finally, we return to the strongest of the assumptions of the whole
framework: that of the homogeneity of $\tau$ in time. This may be
unrealistic in the presence of transient exogenous factors, such as
intermittent network issues, or contesting processes running on the
same hardware only temporarily. If the assumption is relaxed, so that
$\tau$ varies in time, the anytime distribution will vary as well,
and ergodicity will not hold. Figure \ref{fig:toy-target} suggests
that, for example, an exogenous switching factor in $\tau$ would
induce transient effects in the anytime distribution that are not
necessarily eliminated by the multiple chain strategy. There may be
weaker assumptions under which comparable results and appropriate
methods can be established, but this investigation is left to future
work.

\section{Conclusion\label{sec:conclusion}}

This work has presented an approach to allow any MCMC algorithm to
be made an anytime Monte Carlo algorithm, eliminating the length bias
associated with real-time budgets. This is particularly important
in situations where the final state of a Markov chain is more important
than computing averages over all states. It has applications in embedded,
distributed, cloud, real-time and fault-tolerant computing. To demonstrate
the usefulness of the approach, a new SMC$^{2}$ method has been presented,
which exhibits significantly reduced wait time when run on a large-scale
distributed computing system.

\section*{Disclosure Statements}

\emph{Data Availability:} The methods introduced in this paper are implemented in LibBi version
1.3.0, and the empirical results may be reproduced with the LibBi
\emph{Anytime} package. Both are available from \url{www.libbi.org}.

\section*{Acknowledgements}

The authors would like to thank the Isaac Newton Institute for Mathematical
Sciences, Cambridge, for support and hospitality during the programme
\emph{Monte Carlo Methods for Complex Inference Problems} (MCMW01),
where some of this work was undertaken. This work was also financially
supported by an EPSRC-Cambridge Big Data Travel Grant and EPSRC (EP/K020153/1),
The Alan Turing Institute under the EPSRC grant EP/N510129/1, and
the Swedish Foundation for Strategic Research (SSF) via the project
\emph{ASSEMBLE}. The authors would like to thank Pierre Jacob for helpful conversations.

\bibliographystyle{abbrvnat}
\bibliography{anytime}

\appendix

\section{Proofs\label{sec:proofs}}

Recall the assumptions that $H>\epsilon>0$ for some minimal time
$\epsilon$, that $\sup_{x\in\mathbb{X}}\mathbb{E}_{\tau}[H\mid x]<\infty$,
and that $\tau$ is homogeneous in time.

For any positive $\Delta\leq\epsilon$, define the notation: 
\begin{align*}
x & :=x(t), & l & :=l(t), & x_{+} & :=x(t+\Delta), & l_{+} & :=l(t+\Delta).
\end{align*}

At most one jump can occur in any time interval $(t,t+\Delta]$, and
it may occur at any time in that interval. This implies that either
$l_{+}=l+\Delta-h$ with $l_{+}\in[0,\Delta)$ if a jump occurs, or
$l_{+}=l+\Delta$ if no jump occurs.

To simplify the proof of Proposition \ref{prop:invariant} somewhat,
we further assume that $\tau$ admits a density, although this is
not strictly necessary. The below proofs are similar if one wishes
to adopt discrete hold times; assuming, for example, that jumps can
only occur at the end of a processor's clock cycle. Here we assume
that jumps can occur at any real time, and that we may query the state
of the process at any real time.
\begin{proof}[Proof of Proposition \ref{prop:invariant}]
The transition kernel is
\[
\lambda(\mathrm{d}x_{+},\mathrm{d}l_{+}\mid x,l)=\rho(x,l)\lambda_{1}(\mathrm{d}x_{+},\mathrm{d}l_{+}\mid x,l)+(1-\rho(x,l))\lambda_{0}(\mathrm{d}x_{+},\mathrm{d}l_{+}\mid x,l),
\]
where 
\begin{align*}
\rho(x,l) & =\frac{F_{\tau}(l+\Delta\mid x)-F_{\tau}(l\mid x)}{\bar{F}_{\tau}(l\mid x)\hfill} &  & \text{is the probability that a jump occurs,}\\
\lambda_{1}(\mathrm{d}x_{+},\mathrm{d}l_{+}\mid x,l) & =\frac{\kappa(\mathrm{d}x_{+}\mid x,l+\Delta-l_{+})\tau(l+\Delta-l_{+}\mid x)\mathbb{I}_{[0,\Delta)}(\mathrm{d}l_{+})}{F_{\tau}(l+\Delta\mid x)-F_{\tau}(l\mid x)\hfill} &  & \text{is the transition if a jump occurs, and}\\
\lambda_{0}(\mathrm{d}x_{+},\mathrm{d}l_{+}\mid x,l) & =\delta_{x}(\mathrm{d}x_{+})\delta_{l+\Delta}(\mathrm{d}l_{+}) &  & \text{is the transition if no jump occurs}.
\end{align*}

We have:
\begin{align*}
 & \int_{\mathbb{X}}\int_{0}^{\infty}\lambda(\mathrm{d}x_{+},\mathrm{d}l_{+}\mid x,l)\alpha(\mathrm{d}x,\mathrm{d}l)\\
= & \underbrace{\int_{\mathbb{X}}\int_{0}^{\infty}\rho(x,l)\lambda_{1}(\mathrm{d}x_{+},\mathrm{d}l_{+}\mid x,l)\alpha(\mathrm{d}x,\mathrm{d}l)}_{\lyxmathsym{\ding{202}}}+\underbrace{\int_{\mathbb{X}}\int_{0}^{\infty}\left(1-\rho(x,l)\right)\lambda_{0}(\mathrm{d}x_{+},\mathrm{d}l_{+}\mid x,l)\alpha(\mathrm{d}x,\mathrm{d}l)}_{\lyxmathsym{\ding{203}}}.
\end{align*}
To demonstrate that $\alpha(\mathrm{d}x,\mathrm{d}l)$ is the invariant
distribution, it is sufficient to show that the above equals $\alpha(\mathrm{d}x_{+},\mathrm{d}l_{+})$:
\begin{flalign*}
\text{\ding{202}} & =\frac{\mathbb{I}_{[0,\Delta)}(\mathrm{d}l_{+})}{\mathbb{E}_{\tau}[H]}\int_{\mathbb{X}}\int_{0}^{\infty}\kappa(\mathrm{d}x_{+}\mid x,l+\Delta-l_{+})\tau(l+\Delta-l_{+}\mid x)\,\mathrm{d}l\,\pi(\mathrm{d}x)\\
 & =\frac{\mathbb{I}_{[0,\Delta)}(\mathrm{d}l_{+})}{\mathbb{E}_{\tau}[H]}\int_{\mathbb{X}}\int_{0}^{\infty}\kappa(\mathrm{d}x_{+}\mid x,l)\tau(l\mid x)\,\mathrm{d}l\,\pi(\mathrm{d}x) & \text{(since \ensuremath{\Delta-l_{+}\in(0,\Delta]} and \ensuremath{F_{\tau}(\Delta\mid x)=0)}}\\
 & =\frac{\mathbb{I}_{[0,\Delta)}(\mathrm{d}l_{+})}{\mathbb{E}_{\tau}[H]}\int_{\mathbb{X}}\kappa(\mathrm{d}x_{+}\mid x)\pi(\mathrm{d}x)\\
 & =\frac{\mathbb{I}_{[0,\Delta)}(\mathrm{d}l_{+})}{\mathbb{E}_{\tau}[H]}\pi(\mathrm{d}x_{+})\\
 & =\mathbb{I}_{[0,\Delta)}(l_{+})\alpha(\mathrm{d}x_{+},\mathrm{d}l_{+}), & \text{(since \ensuremath{\bar{F}_{\tau}(l_{+}\mid x_{+})=1} for \ensuremath{l_{+}\in[0,\Delta)})}\\
\text{\ding{203}} & =\mathbb{I}_{[\Delta,\infty)}(\mathrm{d}l_{+})\left(1-\rho(x_{+},l_{+}-\Delta)\right)\alpha(\mathrm{d}x_{+},\mathrm{d}l_{+}-\Delta)\\
 & =\mathbb{I}_{[\Delta,\infty)}(\mathrm{d}l_{+})\left(\frac{\bar{F}_{\tau}(l_{+}\mid x)}{\bar{F}_{\tau}(l_{+}-\Delta\mid x)}\right)\left(\frac{\bar{F}_{\tau}(l_{+}-\Delta\mid x)}{\mathbb{E}_{\tau}[H]}\right)\pi(\mathrm{d}x_{+})\\
 & =\mathbb{I}_{[\Delta,\infty)}(\mathrm{d}l_{+})\alpha(\mathrm{d}x_{+},\mathrm{d}l_{+}),\\
\text{\ding{202}}+\text{\ding{203}} & =\alpha(\mathrm{d}x_{+},\mathrm{d}l_{+}). & \qedhere
\end{flalign*}
\end{proof}
\begin{proof}[Proof of Corollary \ref{prop:length-bias}]
We have:
\[
\alpha(\mathrm{d}x)=\frac{\int_{0}^{\infty}\bar{F}_{\tau}(l\mid x)\,\mathrm{d}l}{\mathbb{E}_{\tau}[H]}\pi(\mathrm{d}x)=\frac{\mathbb{E}_{\tau}[H\mid x]}{\mathbb{E}_{\tau}[H]}\pi(\mathrm{d}x).\qedhere
\]
\end{proof}
\begin{proof}[Proof of Corollary \ref{prop:anytime-to-target}]
We have:
\begin{flalign*}
\alpha(\mathrm{d}x,L<\Delta) & =\frac{\int_{0}^{\Delta}\bar{F}_{\tau}(l\mid x)\,\mathrm{d}l}{\mathbb{E}_{\tau}[H]}\pi(\mathrm{d}x)=\frac{\Delta}{\mathbb{E}_{\tau}[H]}\pi(\mathrm{d}x) & \text{(since \ensuremath{\bar{F}_{\tau}(l\mid x)=1} for \ensuremath{l\in[0,\Delta)})}\\
\alpha(L<\Delta) & =\frac{\Delta}{\mathbb{E}_{\tau}[H]}\\
\alpha(\mathrm{d}x\mid L<\Delta) & =\frac{\alpha(\mathrm{d}x,L<\Delta)}{\alpha(L<\Delta)}=\pi(\mathrm{d}x). & \qedhere
\end{flalign*}
\end{proof}
\begin{proof}[Proof of Proposition \ref{prop:product-invariant}]
The transition kernel is:
\[
\nu(\mathrm{d}x_{+}^{1:K+1},\mathrm{d}l_{+}\,|\,x^{1:K+1},l)=\rho(x^{K+1},l)\nu_{1}(\mathrm{d}x_{+}^{1:K+1},\mathrm{d}l_{+}\,|\,x^{1:K+1},l)+\left(1-\rho(x^{K+1},l)\right)\nu_{0}(\mathrm{d}x_{+}^{1:K+1},\mathrm{d}l_{+}\mid x^{1:K+1},l),
\]
where
\begin{align*}
\nu_{1}(\mathrm{d}x_{+}^{1:K+1},\mathrm{d}l_{+}\,|\,x^{1:K+1},l) & =\lambda_{1}(\mathrm{d}x_{+}^{1},\mathrm{d}l_{+}\mid x^{K+1},l)\prod_{k=1}^{K}\delta_{x^{k}}(\mathrm{d}x_{+}^{k+1}) &  & \text{is the transition if a jump occurs, and}\\
\nu_{0}(\mathrm{d}x_{+}^{1:K+1},\mathrm{d}l_{+}\,|\,x^{1:K+1},l) & =\lambda_{0}(\mathrm{d}x_{+}^{K+1},\mathrm{d}l_{+}\mid x^{K+1},l)\prod_{k=1}^{K}\delta_{x^{k}}(\mathrm{d}x_{+}^{k}) &  & \text{is the transition if a jump does not occur}.
\end{align*}

We have:
\begin{align*}
 & \int_{\mathbb{X}^{K+1}}\int_{0}^{\infty}\nu(\mathrm{d}x_{+}^{1:K+1},\mathrm{d}l_{+}\mid x^{1:K+1},l)A(\mathrm{d}x^{1:K+1},\mathrm{d}l)\\
= & \underbrace{\int_{\mathbb{X}^{K+1}}\int_{0}^{\infty}\rho(x^{K+1},l)\nu_{1}(\mathrm{d}x_{+}^{1:K+1},\mathrm{d}l_{+}\mid x^{1:K+1},l)A(\mathrm{d}x^{1:K+1},\mathrm{d}l)}_{\text{\ding{202}}}\\
+ & \underbrace{\int_{\mathbb{X}^{K+1}}\int_{0}^{\infty}\left(1-\rho(x^{K+1},l)\right)\nu_{0}(\mathrm{d}x_{+}^{1:K+1},\mathrm{d}l_{+}\mid x^{1:K+1},l)A(\mathrm{d}x^{1:K+1},\mathrm{d}l)}_{\text{\ding{203}}},
\end{align*}
To demonstrate that $A(\mathrm{d}x^{1:K+1},\mathrm{d}l)$ is the invariant
distribution, it is sufficient to show that the above equals $A(\mathrm{d}x_{+}^{1:K+1},\mathrm{d}l_{+})$:
\begin{flalign*}
\text{\ding{202}} & =\int_{\mathbb{X}}\int_{0}^{\infty}\lambda_{1}(\mathrm{d}x_{+}^{1},\mathrm{d}l_{+}\mid x^{K+1},l)\alpha(\mathrm{d}x^{K+1},\mathrm{d}l)\prod_{k=1}^{K}\int_{\mathbb{X}}\delta_{x^{k}}(\mathrm{d}x_{+}^{k+1})\pi(\mathrm{d}x^{k})\\
 & =\mathbb{I}_{[0,\Delta)}(l_{+})\alpha(\mathrm{d}x_{+}^{1},\mathrm{d}l_{+})\prod_{k=1}^{K}\pi(\mathrm{d}x_{+}^{k+1})\\
 & =\mathbb{I}_{[0,\Delta)}(l_{+})\alpha(\mathrm{d}x_{+}^{K+1},\mathrm{d}l_{+})\prod_{k=1}^{K}\pi(\mathrm{d}x_{+}^{k}) & \text{(by Corollary 3)}\\
 & =\mathbb{I}_{[0,\Delta)}(l_{+})A(\mathrm{d}x_{+}^{1:K+1},\mathrm{d}l_{+}),\\
\text{\ding{203}} & =\mathbb{I}_{[\Delta,\infty)}(l_{+})\int_{0}^{\infty}\lambda_{0}(\mathrm{d}x_{+}^{K+1},\mathrm{d}l_{+}\mid x^{K+1},l)\alpha(\mathrm{d}x^{K+1},\mathrm{d}l)\prod_{k=1}^{K}\int_{\mathbb{X}}\delta_{x^{k}}(\mathrm{d}x_{+}^{k})\pi(\mathrm{d}x^{k})\\
 & =\mathbb{I}_{[\Delta,\infty)}(l_{+})\alpha(\mathrm{d}x_{+}^{K+1},\mathrm{d}l_{+})\prod_{k=1}^{K}\pi(\mathrm{d}x_{+}^{k})\\
 & =\mathbb{I}_{[\Delta,\infty)}(l_{+})A(\mathrm{d}x_{+}^{1:K+1},\mathrm{d}l_{+}),\\
\text{\ding{202}}+\text{\ding{203}} & =A(\mathrm{d}x_{+}^{1:K+1},\mathrm{d}l_{+}). & \qedhere
\end{flalign*}
\end{proof}

\end{document}